\setlist{nolistsep}
\newif\ifcode
\newtheorem{theorem}{Theorem}
\newtheorem{claim}[theorem]{Claim}
\newtheorem{corollary}[theorem]{Corollary}
\newtheorem{definition}{Definition}
\newtheorem{lemma}[theorem]{Lemma}
\newcounter{linenumber}
\newcommand{\remove}[1]{}
\newcommand{\Wset}{\textit{Wset}}
\newcommand{\Rset}{\textit{Rset}}
\newcommand{\Dset}{\textit{Dset}}
\newcommand{\txns}{\textit{txns}}
\newcommand{\Read}{\textit{read}}
\newcommand{\Write}{\textit{write}}
\newcommand{\TryC}{\textit{tryC}}
\newcommand{\ignore}[1]{}
\begin{document}
\bibliographystyle{abbrv}

\title{On Partial Wait-Freedom in Transactional Memory}

\author{
Petr Kuznetsov$^1$~~~Srivatsan Ravi$^2$ \\
$^1$\normalsize T\'el\'ecom ParisTech \\
$^2$\normalsize TU Berlin
}

\date{}
\maketitle
\thispagestyle{empty}
\begin{abstract}
Transactional memory (TM) is a convenient synchronization tool that allows
concurrent threads to declare sequences of instructions on shared data as
speculative \emph{transactions} with ``all-or-nothing'' semantics. 
It is known that dynamic transactional memory cannot provide
\emph{wait-free} progress in the sense that every transaction
commits in a finite number of its own steps. 
In this paper, we explore the costs of providing
wait-freedom to only a \emph{subset} of transactions.  
Since most transactional workloads are believed to be read-dominated, we require that read-only
transactions commit in the wait-free manner, while updating transactions 
are guaranteed to commit only if they run in the absence of concurrency.
We show that this kind of partial wait-freedom, combined
with attractive requirements like read invisibility or disjoint-access
parallelism, incurs considerable complexity costs.
\end{abstract}

\newpage
\pagenumbering{arabic}\setcounter{page}{1}
%
\section{Introduction}
\label{sec:intro}
The transactional memory abstraction (\emph{TM}) allows concurrent
processes to declare  sequences
of 
operations on shared data as \emph{atomic transaction}. 
A transaction may \emph{commit} 
in which case its updates to shared data items ``take effect'', or \emph{abort}, in which
case the transaction does not affect other transactions.
A TM \emph{implementation} provides processes with algorithms for
transactional operations (\emph{read}, \emph{write}, \emph{tryCommit}) using low-level \emph{base objects}.

A natural consistency criterion provided by most TM implementations that we assume in this paper is
\emph{strict serializability}: 
all committed transactions appear to execute sequentially in some
total order respecting the timing of non-overlapping transactions.
This is strictly weaker than conditions like \emph{opacity}~\cite{tm-book} and \emph{virtual-world consistency}~\cite{damien-vw}, 
which require \emph{every} transaction (including aborted and incomplete ones) to observe a consistent state.

The spectrum of progress properties specifying the conditions under
which a transaction must commit appears more interesting.
Perhaps, the most attractive progress property a TM may satisfy is
\emph{wait-freedom}:
every transaction commits regardless of the behavior of concurrent
processes~\cite{Her91}.
Intuitively, wait-free progress is desirable because it guarantees
that a transaction takes effect in a finite
number of its own steps and, thus, the progress of the corresponding
process is not affected by other processes.  
It is easy to see, however, that dynamic TMs in which
data sets of transactions are not known in advance do not allow for
wait-free implementations~\cite{tm-book}. 
Suppose that a transaction
$T_1$ reads data item $X$, then a concurrent transaction $T_2$ reads
data item $Y$, writes to $X$ and commits, and finally $T_2$ writes to
$Y$. 
Since $T_1$ has read the ``old'' value in $X$ and $T_2$ has read the
``old'' value  in $Y$, there is no way to commit $T_1$ and order the
two transactions in a sequential execution.   
As this scenario can be repeated arbitrarily often, even the weaker guarantee of \emph{local progress} that only requires that
each transaction \emph{eventually} commits if repeated
sufficiently often, cannot be ensured by \emph{any} strictly
serializable TM implementation,
regardless of the base objects it uses~\cite{bushkov2012}.           
 
But can we ensure that at least \emph{some} transactions commit wait-free?     
%
It is often argued that many realistic workloads are
\emph{read-dominated}: the proportion of read-only transactions is
higher than that of updating ones, or read-only transactions have much
larger data sets than updating ones~\cite{stmbench7, Attiya09-tmread}. 
Therefore, it seems natural to require first that read-only transactions
commit wait-free. Of course, the progress guarantees for updating transaction have to
be weaker.
Since this paper focuses on complexity lower bounds for
read-only transactions, we consider a very weak property saying
that an updating transaction is guaranteed to commit only if it runs in the absence of concurrency (we refer to this
as sequential progress for updating transactions). 
  
First, we show that if we wish to derive implementations
that use \emph{invisible reads},
i. e., assuming that read-only transactions do not
apply any nontrivial primitives on the base memory, then the implementation must maintain unbounded sets of values
for every data item.
Since read-invisibility is believed important for
(most common) read-dominated workloads, our result suggests that these implementations 
may not be practical due to their space complexity. 

We then focus on \emph{disjoint-access-parallel} (DAP)
implementations~\cite{israeli-disjoint,AHM09}.
The idea of DAP is to allow transactions that do not contend on the
same data item to proceed independently of each other without memory contention. 
A \emph{strict DAP} TM implementation ensures that two transactions
\emph{contend} on a base object (i.e. both access the base object and at least one modifies it)
only if they access a common data item~\cite{GK08-OFT,DStransaction06}.
Interestingly, we prove that it is impossible to implement strict DAP implementations that ensure wait-free progress for
read-only transactions and sequential progress for updating transactions. 
Thus, two transactions that access mutually disjoint data sets may prevent each other from committing.

A less restrictive definition of DAP satisfied by several popular TM implementations~\cite{nztm,fraser,HLM+03}
is \emph{weak DAP}~\cite{AHM09} which is typically defined using a  
\emph{conflict graph} defined for each pair of concurrent transactions $T_1$ and $T_2$.
The vertices of the graph are data items accessed by $T_1$, $T_2$ and all 
transactions that are concurrent to them.  There is an edge between
two data items if the item is accessed by two concurrent transactions
in the set.  
A weak DAP implementation ensures that transactions $T_1$ and $T_2$ 
are allowed to \emph{concurrently contend}  
on a base object (i.e., to concurrently have enabled operations
on it one of which is about to modify the object) only if there
is a path in their conflict graph between a data item accessed by $T_1$ and a data item accessed by $T_2$.
For weak DAP TMs, we show that a read-only transaction (with possibly unbounded read
set) must sometimes perform at least one
expensive synchronization pattern~\cite{AGK11-popl} per read
operation
(the expensive-pattern complexity of a read-only transaction is linear in
the size of its data set).     
These patterns include  \emph{read-after-write} (or \emph{RAW}), which
incurs a costly memory fence on most CPU architectures, or  
a  \emph{atomic-write-after-read} (or \emph{AWAR}), typically
instantiated as atomic \emph{compare-and-swap}.
The metric appears to be a more adequate complexity measure than
simple step complexity, as it accounts for expensive cache-coherence
operations or conditional instructions. 

Overall, our results highlight considerable complexity costs incurred
by requiring partial wait-freedom for read-only transactions, even
when remaining updating transactions are only provided with extremely weak
progress. We hope this paper provides a better understanding of the
pros and contras of diversified progress guarantees for different workloads.   

\vspace{2mm}\noindent\textbf{Roadmap.}
Section~\ref{sec:model} describes our TM model and Section~\ref{sec:classes} defines the TM classes considered in this paper.
We present in Section~\ref{sec:rwl2}, the space complexity of implementations that use invisible reads.
In Section~\ref{sec:lpdap}, we prove the impossibility result concerning strict DAP TMs and in Section~\ref{sec:rwl}, 
the lower bound on the
number of expensive synchronization patterns for weak DAP TMs.
Section~\ref{sec:related} relates our work to earlier results. 
In Section~\ref{sec:disc}, we conclude the paper and discuss open questions.
%

%
\section{Model}
\label{sec:model}
\vspace{1mm}\noindent\textbf{TM interface.}
A \emph{transactional memory} (in short, \emph{TM})
supports \emph{transactions} for
reading and writing on a finite set $\mathcal{X}$ of data items,
referred to as \emph{t-objects}.
Every transaction $T_k$ has a unique identifier $k$. We assume no bound on the size of a t-object i.e. 
the number of possible different values a t-object can have.
A transaction $T_k$ may contain the following \emph{t-operations},
each being a matching pair of an \emph{invocation} and a \emph{response}:
$\Read_k(X)$ returns a value in some domain $V$ (denoted $\Read_k(X) \rightarrow v$)
or a special value $A_k\notin V$ (\emph{abort});
$\Write_k(X,v)$, for a value $v \in V$,
returns \textit{ok} or $A_k$;
$\TryC_k$ returns $C_k\notin V$ (\emph{commit}) or $A_k$.

\vspace{2mm}\noindent\textbf{TM implementations.}
We consider an asynchronous shared-memory system in which
a set of $n$ processes, communicate by applying \emph{primitives} on shared \emph{base objects}.
We assume that processes issue transactions sequentially i.e. a process starts a new transaction
only after the previous transaction has committed or aborted.
A TM \emph{implementation} provides processes with algorithms
for implementing $\Read_k$, $\Write_k$ and $\TryC_k()$
of a transaction $T_k$ by \emph{applying} \emph{primitives} from a set of shared \emph{base objects}, each of which is 
assigned an \emph{initial value}.
We assume that these primitives are \emph{deterministic}.
A primitive is a generic \emph{read-modify-write} (\emph{RMW}) procedure applied to a base object~\cite{G05,Her91}.
It is characterized by a pair of functions $\langle g,h \rangle$:
given the current state of the base object, $g$ is an \emph{update function} that
computes its state after the primitive is applied, while $h$ 
is a \emph{response function} that specifies the outcome of the primitive returned to the process.
A RMW primitive is \emph{trivial} if it never changes the value of the base object to which it is applied.
Otherwise, it is \emph{nontrivial}.

\vspace{2mm}\noindent\textbf{Executions and configurations.}
An \emph{event} of a transaction $T_k$ (sometimes we say \emph{step} of $T_k$)
is an invocation or response of a t-operation performed by $T_k$ or a 
RMW primitive $\langle g,h \rangle$ applied by $T_k$ to a base object $b$
along with its response $r$ (we call it a \emph{RMW event} and write $(b, \langle g,h\rangle, r,k)$).

A \emph{configuration} (of a TM implementation) specifies the value of each base object and 
the state of each process.
The \emph{initial configuration} is the configuration in which all 
base objects have their initial values and all processes are in their initial states.

An \emph{execution fragment} is a (finite or infinite) sequence of events.
An \emph{execution} of a TM implementation $M$ is an execution
fragment where, starting from the initial configuration, each event is
issued according to $M$ and each response of a RMW event $(b, \langle
g,h\rangle, r,k)$ matches the state of $b$ resulting from all
preceding events.
An execution $E\cdot E'$, denoting the concatenation of $E$ and $E'$,
is an \emph{extension} of $E$ and we say that $E'$ \emph{extends} $E$.

Let $E$ be an execution fragment.
For every transaction identifier $k$,
$E|k$ denotes the subsequence of $E$ restricted to events of
transaction $T_k$.
If $E|k$ is non-empty,
we say that $T_k$ \emph{participates} in $E$, else we say $E$ is \emph{$T_k$-free}.
Two executions $E$ and $E'$ are \emph{indistinguishable} to a set $\mathcal{T}$ of transactions, if
for each transaction $T_k \in \mathcal{T}$, $E|k=E'|k$.

The \emph{read set} (resp., the \emph{write set}) of a transaction $T_k$ in an execution $E$,
denoted $\Rset(T_k)$ (and resp. $\Wset(T_k)$), is the set of t-objects on which $T_k$ invokes reads (and resp. writes) in $E$.
The \emph{data set} of $T_k$ is $\Dset(T_k)=\Rset(T_k)\cup\Wset(T_k)$.
A transaction is called \emph{read-only} if $\Wset(T_k)=\emptyset$; \emph{write-only} if $\Rset(T_k)=\emptyset$ and
\emph{updating} if $\Wset(T_k)\neq\emptyset$.

\vspace{2mm}\noindent\textbf{Transaction orders.}
Let $\txns(E)$ denote the set of transactions that participate in $E$.
The \emph{history exported by an execution $E$}
is the subsequence of $E$ consisting of the invocation and response events of t-operations.
Two histories $H$ and $H'$ are \emph{equivalent} if $\txns(H) = \txns(H')$
and for every transaction $T_k \in \txns(H)$, $H|k=H'|k$.
An execution $E$ is \emph{sequential} if every invocation of
a t-operation is either the last event in the history $H$ exported by $E$ or
is immediately followed by a matching response.
We assume that executions are \emph{well-formed} i.e. for all $T_k$, $E|k$ is
sequential and has no events after $A_k$ or $C_k$.
A transaction $T_k\in \txns(E)$ is \emph{complete in $E$} if
$E|k$ ends with a response event.
The execution $E$ is \emph{complete} if all transactions in $\txns(E)$
are complete in $E$.
A transaction $T_k\in \txns(E)$ is \emph{t-complete} if $E|k$
ends with $A_k$ or $C_k$; otherwise, $T_k$ is \emph{t-incomplete}.
$T_k$ is \emph{committed} (resp., \emph{aborted}) in $E$
if the last event of $T_k$ is $C_k$ (resp., $A_k$).
The execution $E$ is \emph{t-complete} if all transactions in
$\txns(E)$ are t-complete.

For transactions $\{T_k,T_m\} \in \txns(E)$, we say that $T_k$ \emph{precedes}
$T_m$ in the \emph{real-time order} of $E$, denoted $T_k\prec_E^{RT} T_m$,
if $T_k$ is t-complete in $E$ and
the last event of $T_k$ precedes the first event of $T_m$ in $E$.
If neither $T_k\prec_E^{RT} T_m$ nor $T_m\prec_E^{RT} T_k$,
then $T_k$ and $T_m$ are \emph{concurrent} in $E$.
An execution $E$ is \emph{t-sequential} if there are no concurrent
transactions in $E$.

\vspace{2mm}\noindent\textbf{Contention.}
We say that a configuration $C$ after an execution $E$ is \emph{quiescent} (and resp. \emph{t-quiescent})
if every transaction $T_k \in \ms{txns}(E)$ is complete (and resp. t-complete) in $C$.
If a transaction $T$ is incomplete in an execution $E$, it has exactly one \emph{enabled} event, 
which is the next event the transaction will perform according to the TM implementation.
Events $e$ and $e'$ of an execution $E$  \emph{contend} on a base
object $b$ if they are both events on $b$ in $E$ and at least 
one of them is nontrivial (the event is trivial if it is the application of a trivial primitive; otherwise,
nontrivial).

We say that a transaction $T$ is \emph{poised to apply an event $e$ after $E$} 
if $e$ is the next enabled event for $T$ in $E$.
We say that transactions $T$ and $T'$ \emph{concurrently contend on $b$ in $E$} 
if they are each poised to apply contending events on $b$ after $E$.

We say that an execution fragment $E$ is \emph{step contention-free for t-operation $op_k$} if the events of $E|op_k$ 
are contiguous in $E$.
We say that an execution fragment $E$ is \emph{step contention-free for $T_k$} if the events of $E|k$ are contiguous in $E$.
We say that $E$ is \emph{step contention-free} if $E$ is step contention-free for all transactions that participate in $E$.
\section{TM classes}
\label{sec:classes}
\vspace{2mm}\noindent\textbf{TM correctness.}
For a history $H$, a \emph{completion of $H$}, denoted ${\bar H}$,
is a history derived from $H$ through the following procedure:
(1) for every incomplete t-operation $op_k$ of $T_k \in \txns(H)$ in $H$,
if $op_k=\Read_k \vee \Write_k$, 
insert $A_k$ somewhere after the invocation of $op_k$; 
otherwise, if $op_k=\TryC_k()$, 
insert $C_k$ or $A_k$ somewhere after the last event of $T_k$.
(2) for every complete transaction $T_k$ that is not t-complete, insert $\TryC_k\cdot A_k$ somewhere after the 
last event of transaction $T_k$.

For simplicity of presentation, we assume that each execution $E$
begins with an ``imaginary'' transaction that writes initial
values to all t-objects and commits before any other transaction
begins in $E$.
Let $E$ be a
t-sequential execution.
For every operation $\Read_k(X)$ in $E$,
we define the \emph{latest written value} of $X$ as follows:
\begin{enumerate}
\item[(1)] 
If $T_k$ contains a $\Write_k(X,v)$ preceding $\Read_k(X)$,
then the latest written value of $X$ is the value of the latest such write to $X$.
\item[(2)] 
Otherwise, if $E$ contains a $\Write_m(X,v)$,
$T_m$ precedes $T_k$, and $T_m$ commits in $E$,
then the latest written value of $X$ is the value
of the latest such write to $X$ in $E$.
(This write is well-defined since $E$ starts with an initial transaction writing to
all t-objects.)
\end{enumerate}
We say that $\Read_k(X)$ is \emph{legal} in a t-sequential execution $E$ if it returns the
\emph{latest written value} of $X$ in $E$, and $E$ is \emph{legal}
if every $\Read_k(X)$ in $E$ that does not return $A_k$ is legal in $E$.
\begin{definition}
(Strict serializability) A finite history $H$ is \emph{strictly serializable} if there
is a legal t-complete t-sequential history $S$,
such that
(1) for any two transactions $T_k,T_m \in \txns(H)$,
if $T_k \prec_H^{RT} T_m$, then $T_k$ precedes $T_m$ in $S$, and
(2) $S$ is equivalent to $\ms{cseq}(\bar H)$, where $\bar H$ is some
completion of $H$ and $\ms{cseq}(\bar H)$ is the subsequence of $\bar H$ reduced to
committed transactions in $\bar H$.

We refer to $S$ as a \emph{serialization} of $H$.
\end{definition}
\begin{definition}
\label{def:rwf} 
(The class $\mathcal{RWF}$) 
A TM implementation $M \in \mathcal{RWF}$ \emph{iff} it is strictly serializable and
in its every execution:
\begin{itemize}
\item 
(\emph{wait-free progress for read-only transactions}) 
every read-only transaction 
commits
in a finite number of its steps, and
\item 
(\emph{sequential progress for updating transactions}) 
every transaction running step contention-free from a t-quiescent
configuration, commits in a finite number of its steps.
\end{itemize}
\end{definition}
\vspace{2mm}\noindent\textbf{Invisible reads.}
We say that a TM implementation $M$ uses \emph{invisible reads} if in every execution $E$ of $M$, and
every read-only transaction $T_k\in \ms{txns}(E)$, $E|k$ does not contain any nontrivial events.

\vspace{2mm}\noindent\textbf{Disjoint-access parallelism (DAP).}
A TM implementation $M$ is \emph{strictly disjoint-access parallel
  (strict DAP)} if, for
all executions $E$ of $M$, and for all transactions $T_i$ and $T_j$ that participate in $E$, 
$T_i$ and $T_j$ contend on a base object in $E$ only if 
$\Dset(T_i)\cap \Dset(T_j)\neq \emptyset$~\cite{tm-book}.

For an execution fragment, let  $\tau_{E}(T_1,T2)$  denote the set of transactions ($T_1$ and $T_2$ included)
that are concurrent to at least one of $T_1$ and $T_2$ in $E$.
Let $G(T_1,T_2,E)$ be an undirected graph with the vertex set $\cup_{T \in \tau_{E}(T_1,T_2)} \Dset(T)$
and there is an edge
between t-objects $X$ and $Y$ \emph{iff} there exists $T \in \tau_{E}(T_1,T_2)$ such that 
$\{X,Y\} \in \Dset(T)$.
We say that $T_1$ and $T_2$ are \emph{disjoint-access} in $E$
if there is no path between a t-object in $\Dset(T_1)$ and a t-object in $\Dset(T_2)$ in $G(T_1,T_2,E)$.
A TM implementation $M$ is \emph{weak DAP} if, in
all executions $E$ of $M$, 
any two transactions $T_1$ and $T_2$ 
concurrently contend on the same base object after $E$ only if   
$T_1$ and $T_2$ are not disjoint-access in $E$ or there exists a t-object $X \in \Dset(T_1) \cap \Dset(T_2)$~\cite{AHM09,PFK10}.

Observe that every strict DAP TM implementation satisfies weak DAP, but not vice-versa.

We first prove the
following auxiliary result, inspired by \cite{AHM09}:
\begin{lemma}
\label{lm:dap}
Let $M$ be any weak DAP TM implementation.
Let $\alpha\cdot \rho_1 \cdot \rho_2$ be any execution of $M$ where
$\rho_1$ (and resp. $\rho_2$) is the step contention-free
execution fragment of transaction $T_1 \not\in \ms{txns}(\alpha)$ (and resp. $T_2 \not\in \ms{txns}(\alpha)$) 
and transactions $T_1$, $T_2$ are disjoint-access in $\alpha\cdot \rho_1 \cdot \rho_2$. 
Then, $T_1$ and $T_2$ do not contend on any base object in $\alpha\cdot \rho_1 \cdot \rho_2$.
\end{lemma}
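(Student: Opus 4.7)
The plan is to argue by contradiction: assuming that $T_1$ and $T_2$ contend on some base object $b$ in $\alpha\cdot \rho_1 \cdot \rho_2$, I would construct a shorter execution in which $T_1$ and $T_2$ \emph{concurrently} contend on $b$ while remaining disjoint-access with $\Dset(T_1)\cap \Dset(T_2)=\emptyset$, directly violating weak DAP of $M$.

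The first step is to pick a ``minimal'' pair of contending events. I let $e_1$ be the earliest event of $T_1$ in $\rho_1$ that contends, on some base object $b$, with some event of $T_2$ in $\rho_2$, and let $e_2$ be the earliest event of $T_2$ on $b$ in $\rho_2$ that contends with $e_1$. Writing $\rho_1=\rho_1'\cdot e_1\cdot \tilde{\rho}_1$ and $\rho_2=\rho_2'\cdot e_2\cdot \tilde{\rho}_2$, the target execution is $\alpha\cdot \rho_1'\cdot \rho_2'$.

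The main technical step is to show that $\alpha\cdot \rho_1'\cdot \rho_2'$ is a legal execution of $M$ and that at its end $T_1$ and $T_2$ are both poised to apply their respective contending events on $b$. For any base object $b'$ accessed by $T_2$ somewhere in $\rho_2$, the minimality of $e_1$ forces every event of $T_1$ on $b'$ inside $\rho_1'$ to be trivial; otherwise that earlier $T_1$-event would already contend with some $T_2$-event, contradicting the choice of $e_1$. Consequently, inserting $\rho_1'$ between $\alpha$ and $\rho_2'$ leaves the state of every base object that $T_2$ touches in $\rho_2'$ unchanged, so $T_2$'s steps in $\rho_2'$ unfold in $\alpha\cdot \rho_1'\cdot \rho_2'$ exactly as they do in $\alpha\cdot \rho_2'$, and $T_2$ ends up poised to apply $e_2$ on $b$. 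A symmetric argument handles $T_1$: its local state at the end of $\alpha\cdot \rho_1'$ depends only on its own history $\rho_1'$, and the state of $b$ is left untouched by the events of $\rho_2'$ (which, on $b$, are all trivial by the minimality of $e_2$), so $T_1$ is still poised to apply $e_1$ on $b$.

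Finally, I would check that disjoint-access survives the truncation and conclude. Because $\rho_1'$ and $\rho_2'$ are prefixes of $\rho_1$ and $\rho_2$ respectively, the data sets $\Dset(T_1)$ and $\Dset(T_2)$ in $\alpha\cdot \rho_1'\cdot \rho_2'$ are subsets of those in $\alpha\cdot \rho_1\cdot \rho_2$; and because $T_1,T_2$ start right after $\alpha$ in both executions and the events of any other transaction are confined to $\alpha$, $\tau(T_1,T_2)$ is identical in the two executions. Hence $G(T_1,T_2,\alpha\cdot \rho_1'\cdot \rho_2')$ is a subgraph of $G(T_1,T_2,\alpha\cdot \rho_1\cdot \rho_2)$, so $T_1,T_2$ remain disjoint-access and still have $\Dset(T_1)\cap \Dset(T_2)=\emptyset$ in the shorter execution. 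The concurrent contention on $b$ therefore violates weak DAP, giving the desired contradiction. The main obstacle I anticipate is the minimality bookkeeping: $e_1$ must be chosen as the earliest $T_1$-event contending with any event of $T_2$ in the \emph{entire} $\rho_2$ (not just in $\rho_2'$), precisely to rule out hidden nontrivial interference between $\rho_1'$ and the truncated suffix $\tilde{\rho}_2$.
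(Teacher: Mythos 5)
Your overall strategy --- truncate to a minimal contending pair, show the two transactions are then concurrently poised to contend on the same base object, and invoke weak DAP --- is exactly the paper's, but your choice of the minimal pair is wrong, and the argument breaks at the step where you claim $T_2$ is still poised to apply $e_2$ after $\alpha\cdot\rho_1'\cdot\rho_2'$. The fragment $\rho_2'$ was recorded in an execution where \emph{all} of $\rho_1$ preceded it, so the responses of $T_2$'s primitives in $\rho_2'$ match the object states after $\alpha\cdot\rho_1$, not after $\alpha$ (your comparison point) or after $\alpha\cdot\rho_1'$. The relevant question is therefore whether the \emph{removed} suffix $e_1\cdot\tilde{\rho}_1$ contains a nontrivial event on some object that $T_2$ touches in $\rho_2'\cdot e_2$. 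Your minimality condition only controls the events of $T_1$ \emph{before} $e_1$ --- but those are kept, not removed, so their triviality buys nothing; choosing $e_1$ as the \emph{earliest} contending event leaves the removed suffix unconstrained. Concretely: let $\rho_1$ be a trivial read of $b$ followed by a nontrivial write to $c$, and $\rho_2$ a read of $c$ (returning the value $T_1$ wrote) followed by a nontrivial write to $b$. Both pairs contend; your $e_1$ is $T_1$'s read of $b$, so $\rho_1'$ is empty and $\rho_2'$ is $T_2$'s read of $c$, whose recorded response no longer matches the state of $c$. Then $\alpha\cdot\rho_1'\cdot\rho_2'$ is not an execution of $M$, and $T_2$ need not be poised to apply $e_2$. (Your final caveat worries about interference between $\rho_1'$ and the discarded $\tilde{\rho}_2$, which is harmless; the dangerous interference is between the discarded $\tilde{\rho}_1$ and the retained $\rho_2'$.)

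The fix is the paper's case split, which anchors on a \emph{last} event of $T_1$: if $T_1$ applies some nontrivial primitive to an object on which it contends with $T_2$, take $e_1$ to be the last such event (say on $b$) and $e_2$ the first access of $T_2$ to $b$; otherwise take $e_2$ to be the first nontrivial event of $T_2$ on a contended object $b$ and $e_1$ the last access of $T_1$ to $b$. Either way, every event of $T_1$ in $e_1\cdot\tilde{\rho}_1$ that touches an object accessed by $T_2$ is trivial, except possibly $e_1$ itself on $b$ --- and $T_2$ does not access $b$ within $\rho_2'$ by the choice of $e_2$. Hence $\alpha\cdot\rho_1'\cdot\rho_2'$ is indistinguishable to $T_2$ from $\alpha\cdot\rho_1\cdot\rho_2'$, both transactions are poised to apply contending events on $b$, and weak DAP is violated. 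Your closing observations that disjoint-access and disjointness of data sets survive the truncation are correct and match the paper; the defect is solely in the selection of $e_1,e_2$.
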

\begin{proof}
Suppose, by contradiction that $T_1$ and $T_2$ contend on the same base object in $\alpha\cdot \rho_1\cdot \rho_2$.

If in $\rho_1$, $T_1$ performs a nontrivial event on a base object on which they contend, let $e_1$ be the last
event in $\rho_1$ in which $T_1$ performs such an event to some base object $b$ and $e_2$, the first event
in $\rho_2$ that accesses $b$.
Otherwise, $T_1$
only performs trivial events in $\rho_1$ to base objects on which it contends with $T_2$ in $\alpha\cdot \rho_1\cdot \rho_2$:
let $e_2$ be the first event in $\rho_2$ in which $\rho_2$ performs a nontrivial event to some base object $b$
on which they contend and $e_1$, the last event of $\rho_1$ in $T_1$ that accesses $b$.

Let $\rho_1'$ (and resp. $\rho_2'$) be the longest prefix of $\rho_1$ (and resp. $\rho_2$) that does not include
$e_1$ (and resp. $e_2$).
Since before accessing $b$, the execution is step contention-free for $T_1$, $\alpha \cdot
\rho_1'\cdot \rho_2'$ is an execution of $M$.
By construction, $T_1$ and $T_2$ are disjoint-access in $\alpha \cdot \rho_1'\cdot \rho_2'$
and $\alpha\cdot \rho_1 \cdot \rho_2'$ is indistinguishable to $p_2$ from $\alpha\cdot \rho_1' \cdot \rho_2'$.
Hence, $T_1$ and
$T_2$ are poised to apply contending events $e_1$ and $e_2$ on $b$ in the configuration after 
$\alpha\cdot \rho_1' \cdot \rho_2'$---a contradiction since $T_1$ and $T_2$ cannot concurrently contend on the same base object.   
\end{proof}
\section{On the cost of invisible reads}
\label{sec:rwl2}
We prove that every TM implementation $M\in \mathcal{RWF}$ that uses invisible reads 
must keep unbounded sets of values for every t-object.
To do so, for every $c \in \mathbb{N}$, we construct an execution of $M$ that \emph{maintains at least $c$ distinct values
for every t-object}. We require the following technical definition:
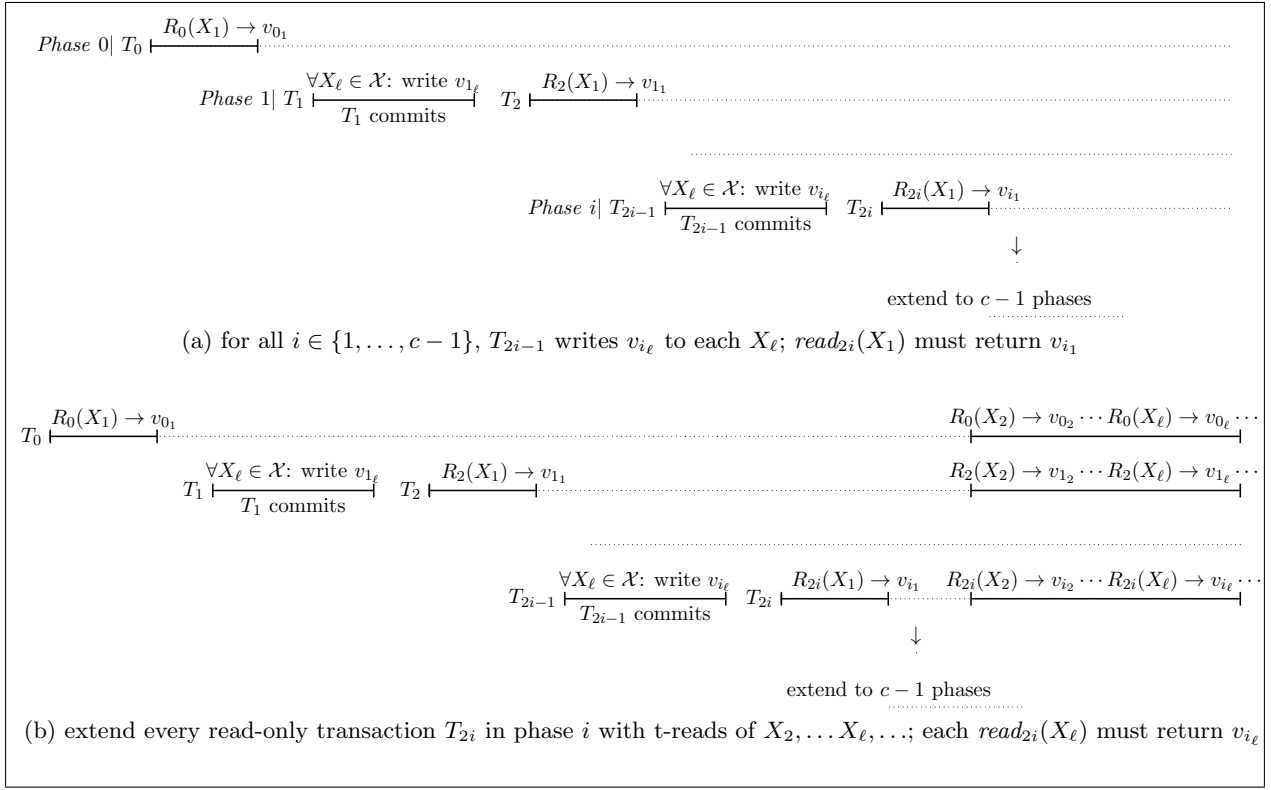
\begin{figure*}[t]
\begin{center}
	\subfloat[for all $i\in \{1,\ldots , c-1 \}$, $T_{2i-1}$ writes $v_{i_{\ell}}$ to each $X_{\ell}$; $\Read_{2i}(X_1)$ must return $v_{i_{1}}$ \label{sfig:res-0}]{\scalebox{0.72}[0.72]{\begin{tikzpicture}
\node (r1) at (1.4,0) [] {};
\node (r2) at (8.4,-1) [] {};
\node (ri) at (14.9,-3) [] {};

\node (w1) at (4.5,-1) [] {};
\node (wi) at (11,-3) [] {};

\draw (r1) node [above] {\normalsize {$R_0(X_1) \rightarrow v_{0_{1}}$}};
\draw (r2) node [above] {\normalsize {$R_2(X_1) \rightarrow v_{1_{1}}$}};
\draw (ri) node [above] {\normalsize {$R_{2i}(X_1) \rightarrow v_{i_{1}}$}};

\draw (w1) node [above] {\normalsize {$\forall X_{\ell}\in \mathcal{X}$: write $v_{1_{\ell}}$}}; 
\draw (w1) node [below] {\normalsize {$T_1$ commits}};

\draw (wi) node [above] {\normalsize {$\forall X_{\ell}\in \mathcal{X}$: write $v_{i_{\ell}}$}}; 
\draw (wi) node [below] {\normalsize {$T_{2i-1}$ commits}};

\begin{scope}   
\draw [|-,dotted] (0,0) node[left] {$\ms{Phase}~ 0|$ $T_0$} to (20,0);
\draw [|-|,thick] (0,0) node[left] {} to (2,0);
\end{scope}
\begin{scope}   
\draw [|-,dotted] (7,-1) node[left] {  $T_2$} to (20,-1);
\draw [|-|,thick] (3,-1) node[left] {$\ms{Phase}~ 1|$ ${T}_{1}$} to (6,-1);
\draw [|-|,thick] (7,-1) node[left] {} to (9,-1);
\end{scope}
\begin{scope}   
\draw [-,dotted] (10,-2) node[left] {} to (20,-2);
\end{scope}
\begin{scope}   
\draw [|-,dotted] (13.5,-3) node[left] {} to (20,-3);
\draw [|-|,thick] (9.5,-3) node[left] { $\ms{Phase}~ i|$ ${T}_{2i-1}$} to (12.5,-3);
\draw [|-|,thick] (13.5,-3) node[left] {$T_{2i}$} to (15.5,-3);
\end{scope}
\begin{scope}   
\draw [] (16,-4) node[above] {$\downarrow$} to (16,-4);
\draw [-,dotted] (15.5,-5) node[above] {extend to $c-1$ phases} to (18,-5);
\end{scope}
\end{tikzpicture}}}
	\\
	\vspace{2mm}
	\subfloat[extend every read-only transaction $T_{2i}$ in phase $i$ with t-reads of $X_2,\ldots X_{\ell}, \ldots $; each $\Read_{2i}(X_{\ell})$ must return $v_{i_{\ell}}$ \label{sfig:res-1}]{\scalebox{0.72}[0.72]{\begin{tikzpicture}
\node (r1) at (1.2,0) [] {};
\node (r2) at (8.4,-1) [] {};
\node (ri) at (14.9,-3) [] {};
\node (rf1) at (19.5,-0) [] {};
\node (rf2) at (19.5,-1) [] {};
\node (rfi) at (19.5,-3) [] {};

\node (w1) at (4.5,-1) [] {};
\node (wi) at (11,-3) [] {};

\draw (r1) node [above] {\normalsize {$R_0(X_1) \rightarrow v_{0_{1}}$}};
\draw (r2) node [above] {\normalsize {$R_2(X_1) \rightarrow v_{1_{1}}$}};
\draw (ri) node [above] {\normalsize {$R_{2i}(X_1) \rightarrow v_{i_{1}}$}};

\draw (rf1) node [above] {\normalsize {$R_0(X_2) \rightarrow v_{0_{2}} \cdots R_0(X_\ell)\rightarrow v_{0_{\ell}} \cdots$}};
\draw (rf2) node [above] {\normalsize {$R_2(X_2) \rightarrow v_{1_{2}} \cdots R_2(X_\ell)\rightarrow v_{1_{\ell}} \cdots$}};
\draw (rfi) node [above] {\normalsize {$R_{2i}(X_2) \rightarrow v_{i_{2}} \cdots R_{2i}(X_\ell)\rightarrow v_{i_{\ell}} \cdots$}};

\draw (w1) node [above] {\normalsize {$\forall X_{\ell} \in \mathcal{X}$: write $v_{1_{\ell}}$}}; 
\draw (w1) node [below] {\normalsize {$T_1$ commits}};

\draw (wi) node [above] {\normalsize {$\forall X_{\ell} \in \mathcal{X}$: write $v_{i_{\ell}}$}}; 
\draw (wi) node [below] {\normalsize {$T_{2i-1}$ commits}};

\begin{scope}   
\draw [|-,dotted] (0,0) node[left] {$T_0$} to (22,0);
\draw [|-|,thick] (0,0) node[left] {} to (2,0);
\draw [|-|,thick] (17,0) node[left] {} to (22,0);
\end{scope}
\begin{scope}   
\draw [|-,dotted] (7,-1) node[left] {$T_2$} to (22,-1);
\draw [|-|,thick] (3,-1) node[left] {${T}_{1}$} to (6,-1);
\draw [|-|,thick] (7,-1) node[left] {} to (9,-1);
\draw [|-|,thick] (17,-1) node[left] {} to (22,-1);
\end{scope}
\begin{scope}   
\draw [-,dotted] (10,-2) node[left] {} to (22,-2);
\end{scope}
\begin{scope}   
\draw [|-,dotted] (13.5,-3) node[left] {} to (22,-3);
\draw [|-|,thick] (9.5,-3) node[left] {${T}_{2i-1}$} to (12.5,-3);
\draw [|-|,thick] (13.5,-3) node[left] {$T_{2i}$} to (15.5,-3);
\draw [|-|,thick] (17,-3) node[left] {} to (22,-3);
\end{scope}
\begin{scope}   
\draw [] (16,-4) node[above] {$\downarrow$} to (16,-4);
\draw [-,dotted] (15.5,-5) node[above] {extend to $c-1$ phases} to (18,-5);
\end{scope}
\end{tikzpicture}}}
        
	\caption{Executions in the proof of Theorem~\ref{th:inv}; execution in \ref{sfig:res-0} must maintain $c$ distinct
	values of every t-object
        \label{fig:invdap}} 
\end{center}
\end{figure*}
\begin{definition}
Let $E$ be any execution of a TM implementation $M$.
We say that $E$ \emph{maintains $c$ distinct values $\{v_1,\ldots , v_c\}$ of t-object $X$},
if there exists an execution $E\cdot E'$ of $M$ such that
\begin{itemize}
\item
$E'$ contains the complete executions of $c$ t-reads of $X$ and,
\item
for all $i\in \{1,\ldots , c\}$, the response of the $i^{th}$ t-read of $X$ in $E'$ is $v_i$,
and if the response of the $i^{th}$ t-read of $X$ in $E'$ is $r\neq v_i$,
then $E\cdot E'$ is not an execution of $M$.
\end{itemize}
\end{definition}
\begin{theorem}
\label{th:inv}
Let $M$ be any TM implementation in $\mathcal{RWF}$ that uses invisible reads, and
$\mathcal{X}$, the set of all possible t-objects that may be accessed in any execution of $M$.
Then, for every $c \in \mathbb{N}$, there exists an execution $E$ of $M$
such that $E$ maintains at least $c$ distinct values of each t-object $X\in \mathcal{X}$.
\end{theorem}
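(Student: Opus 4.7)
My plan is to construct, for each $c \in \mathbb{N}$, an execution $E$ of $M$ inductively through phases $0, 1, \ldots, c-1$ (matching the schedule of Figure~\ref{sfig:res-0}), and then exhibit an extension $E'$ witnessing $c$ distinct values for each t-object as in Figure~\ref{sfig:res-1}. Phase $0$ consists of a read-only transaction $T_0$, executed solo by a fresh process, performing a single $\Read_0(X_1)$, which by strict serializability must return the initial value $v_{0_1}$. For each $i = 1, \ldots, c-1$, phase $i$ consists of (a) an updating transaction $T_{2i-1}$, run solo by a fresh process, that writes freshly chosen distinct values $v_{i_\ell}$ to each $X_\ell \in \mathcal{X}$ and commits, followed by (b) a read-only transaction $T_{2i}$, also run solo by a fresh process, that performs a single $\Read_{2i}(X_1)$. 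All of $T_0, T_2, \ldots, T_{2(c-1)}$ remain t-incomplete throughout $E$.

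The first key step is to guarantee that each updating transaction $T_{2i-1}$ in phase $i$ commits. The configuration at the start of phase $i$ is not t-quiescent---the read-only transactions $T_0, T_2, \ldots, T_{2i-2}$ are all t-incomplete---but invisible reads imply that none of them has applied any nontrivial primitive, so the base-object values are identical to those in the genuinely t-quiescent configuration obtained by running only $T_1, T_3, \ldots, T_{2i-3}$ in the same order. The solo execution of $T_{2i-1}$ is therefore indistinguishable to $T_{2i-1}$ from its solo execution starting from that t-quiescent configuration; by the sequential progress property, $T_{2i-1}$ commits.

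The second key step uses strict serializability together with wait-free progress for read-only transactions to force the response of $\Read_{2i}(X_1)$ to be $v_{i_1}$. By wait-freedom, there exists an extension in which $T_{2i}$ runs solo and commits. In any serialization of the committed transactions of this extension, $T_{2i}$ must follow its real-time predecessors $T_1, T_3, \ldots, T_{2i-1}$, and since $T_{2i-1}$ is the last committed writer of $X_1$, legality of $\Read_{2i}(X_1)$ forces the response $v_{i_1}$; any other response would contradict strict serializability of $M \in \mathcal{RWF}$.

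For the final claim, I would extend $E$ by appending to each $T_{2i}$ (for $i = 0, 1, \ldots, c-1$) additional t-reads $\Read_{2i}(X_\ell)$ for every $X_\ell \in \mathcal{X}$, as in Figure~\ref{sfig:res-1}. An analogous strict-serializability argument---now exploiting the already-pinned response $v_{i_1}$ of $\Read_{2i}(X_1)$, which constrains $T_{2i}$'s serialization position to lie between $T_{2i-1}$ and $T_{2i+1}$---forces each $\Read_{2i}(X_\ell)$ to return $v_{i_\ell}$. Since the values $v_{0_\ell}, v_{1_\ell}, \ldots, v_{(c-1)_\ell}$ were chosen distinct, $E$ maintains $c$ distinct values of $X_\ell$. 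The main obstacle is exactly this serialization-pinning step: each $T_{2i}$ is concurrent in real-time order with every later updating transaction, so a priori a serialization could legitimately place $T_{2i}$ after some $T_{2j-1}$ with $j > i$; one must use the specific response $v_{i_1}$ to $\Read_{2i}(X_1)$ to exclude all such serializations and confine $T_{2i}$ to the unique slot immediately after $T_{2i-1}$.
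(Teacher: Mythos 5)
Your proposal is correct and follows essentially the same route as the paper's own proof: the same phased construction, the same use of read invisibility to make each writer's starting configuration indistinguishable from a t-quiescent one (so sequential progress commits it), and the same strict-serializability argument that pins each $T_{2i}$'s serialization slot via the response $v_{i_1}$ of $\Read_{2i}(X_1)$ before forcing the remaining reads to return $v_{i_\ell}$. The only additions are helpful explicit details the paper leaves implicit (fresh processes for the concurrent incomplete readers, distinctness of the chosen values).
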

\begin{proof}
Let $v_{0_{\ell}}$ be the initial value of t-object $X_{\ell}\in \mathcal{X}$.
For every $c\in \mathbb{N}$, we iteratively construct an execution $E$ of $M$ of the form depicted in Figure~\ref{sfig:res-0}.
The construction of $E$ proceeds in phases: there are at most $c-1$ phases. 
For all $i\in \{0,\ldots c-1\}$, we denote the execution after phase $i$ as $E_i$ which
is defined as follows:
\begin{itemize}
\item
$E_0$ is the complete step contention-free execution fragment $\alpha_0$ of read-only transaction $T_0$ that performs
$\Read_0(X_1)\rightarrow v_{0_{1}}$
\item
for all $i \in \{1,\ldots ,c-1\}$, $E_i$ is defined to be an execution of the form 
$\alpha_0\cdot \rho_1\cdot \alpha_1 \cdots \rho_i\cdot \alpha_i$ such that for all $j\in \{1,\ldots, i\}$,
\begin{itemize}
\item
$\rho_j$ is the t-complete step contention-free execution fragment of an updating transaction ${T}_{2j-1}$ that,
for all $X_{\ell} \in \mathcal{X}$ writes the value $v_{j_{\ell}}$ and commits
\item
$\alpha_j$ is the complete step contention-free execution fragment of a read-only transaction $T_{2j}$ that performs
$\Read_{2j}(X_1)\rightarrow v_{j_{1}}$
\end{itemize}
\end{itemize}
%
Since read-only transactions are invisible, for all $i\in \{0,\ldots, c-1\}$, the execution fragment $\alpha_i$ does not
contain any nontrivial events.
Consequently, for all $i<j\leq c-1$, the configuration after $E_i$ is indistinguishable to transaction ${T}_{2j-1}$
from a t-quiescent configuration and it must be committed in $\rho_{j}$ (by sequential progress for updating transactions).
Observe that, for all $1\leq j < i$, ${T}_{2j-1} \prec_{E}^{RT} {T}_{2i-1}$.
Strict serializability of $M$ now stipulates that, for all $i \in \{1,\ldots, c-1\}$, 
the t-read of $X_1$ performed by transaction $T_{2i}$ in the 
execution fragment $\alpha_{i}$ must return the value $v_{i_{1}}$ of $X_1$ as written by transaction ${T}_{2i-1}$ in
the execution fragment $\rho_i$ (in any serialization, 
${T}_{2i-1}$ is the latest committed transaction writing to $X_1$ that precedes $T_{2i}$). 
Thus, $M$ indeed has an execution $E$ of the form depicted in Figure~\ref{sfig:res-0}.

Consider the execution fragment $E'$ that extends $E$ in which, for all $i\in \{0,\ldots ,c-1 \}$, read-only transaction $T_{2i}$ 
is extended with the complete execution of the t-reads of every t-object 
$X_{\ell} \in \mathcal{X}\setminus \{X_1\}$ (depicted in Figure~\ref{sfig:res-1}).

We claim that, for all $i\in \{0,\ldots ,c-1 \}$, and for all $X_{\ell} \in \mathcal{X}\setminus \{X_1\}$,
$\Read_{2i}(X_{\ell})$ performed by transaction $T_{2i}$ must return the value $v_{i_{\ell}}$ of $X_{\ell}$ 
written by transaction ${T}_{2i-1}$ in the execution fragment $\rho_i$.
Indeed, by wait-free progress, $\Read_i(X_{\ell})$ must return a non-abort response in such an extension of $E$.
Suppose by contradiction that $\Read_i(X_{\ell})$ returns a response that is not $v_{i_{\ell}}$.
There are two cases: 
\begin{itemize}
\item
$\Read_{2i}(X_{\ell})$ returns the value $v_{j_{\ell}}$ written by transaction ${T}_{2j-1}$; $j<i$.
However, since for all $j < i$, ${T}_{2j} \prec_{E}^{RT} {T}_{2i}$, the execution is not strictly serializable---contradiction.
\item
$\Read_{2i}(X_{\ell})$ returns the value $v_{j_{\ell}}$ written by transaction ${T}_{2j}$; $j>i$.
Since $\Read_i(X_1)$ returns the value $v_{i_{1}}$ and ${T}_{2i} \prec_{E}^{RT} {T}_{2j}$,
there exists no such serialization---contradiction.
\end{itemize}
Thus, $E$ maintains at least $c$ distinct values of every t-object $X\in \mathcal{X}$.
\end{proof}
%
%
\section{Impossibility of strict disjoint-access parallelism}
\label{sec:lpdap}
In this section, we prove that it is impossible to derive TM implementations in $\mathcal{RWF}$ which ensure
that any two transactions accessing pairwise disjoint data sets can execute without contending on the same base object.
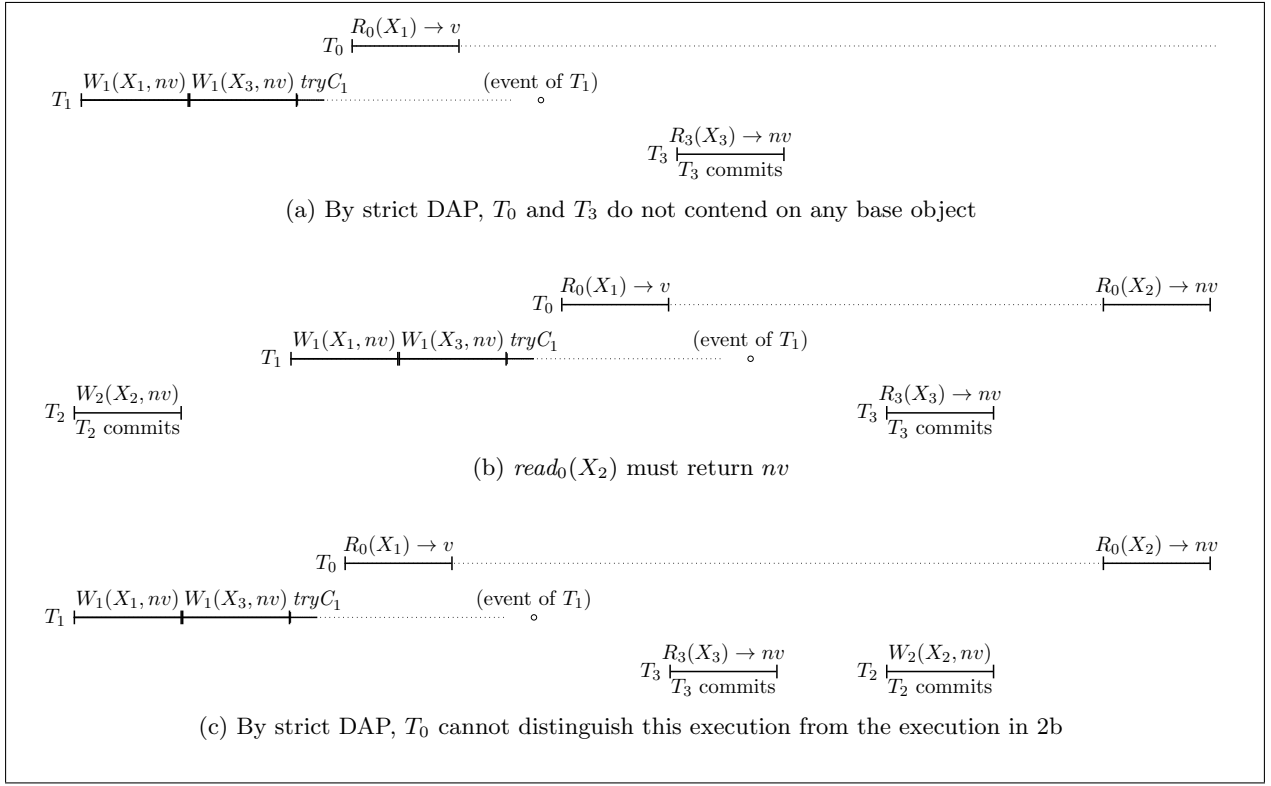
\begin{figure*}[t]
\begin{center}
	\subfloat[By strict DAP, $T_0$ and $T_3$ do not contend on any base object \label{sfig:dap-0}]{\scalebox{0.72}[0.72]{\begin{tikzpicture}
\node (r1) at (3,0) [] {};

\node (e) at (5.5,-1) [] {};

\node (w1) at (-2,-1) [] {};
\node (w2) at (0,-1) [] {};
\node (c) at (1.5,-1) [] {};

\node (r3) at (9,-2) [] {};

\draw (r1) node [above] {\normalsize{$R_0(X_1) \rightarrow v$}};

\draw (e) node [above] {\normalsize {(event of $T_1$)}};

\draw (w1) node [above] { \normalsize{$W_1(X_1,nv)$}};
\draw (w2) node [above] { \normalsize{$W_1(X_3,nv)$}};
\draw (c) node [above] { \normalsize{$\TryC_1$}};

\draw (r3) node [above] { \normalsize{$R_3(X_3) \rightarrow nv$}};
\draw (r3) node [below] { \normalsize{$T_3$ commits}};

\begin{scope}   
\draw [-,dotted] (2,0) node[left] {$T_0$} to (18,0);
\draw [|-|,thick] (2,0) node[left] {} to (4,0);
\end{scope}
\begin{scope}   
\draw [-,thick] (-3,-1) node[left] {$T_1$} to (1.5,-1);
\draw [|-|,thick] (-3,-1) node[left] {} to (-1,-1);
\draw [|-|,thick] (-1,-1) node[left] {} to (1,-1);
\draw [|-,dotted] (1,-1) node[left] {} to (1.5,-1);
\draw  (5.5,-1) circle [fill, radius=0.05]  (5.5,-1);
\draw [-,dotted] (-3,-1) to  (5,-1);
\end{scope}
\begin{scope}   
\draw [|-|,thick] (8,-2) node[left] {$T_3$} to (10,-2);
\end{scope}
\end{tikzpicture}}}
 	\\ 
	\vspace{2mm}
	\subfloat[$\Read_0(X_2)$ must return $nv$ \label{sfig:dap-1}]{\scalebox{0.72}[0.72]{\begin{tikzpicture}
\node (r1) at (3,0) [] {};
\node (r2) at (13,0) [] {};

\node (e) at (5.5,-1) [] {};

\node (w1) at (-2,-1) [] {};
\node (w2) at (0,-1) [] {};
\node (c) at (1.5,-1) [] {};

\node (r3) at (9,-2) [] {};
\node (w3) at (-6,-2) [] {};

\draw (r1) node [above] {\normalsize{$R_0(X_1) \rightarrow v$}};
\draw (r2) node [above] {\normalsize{$R_0(X_2)\rightarrow nv$}};

\draw (e) node [above] {\normalsize{(event of $T_1$)}};

\draw (w1) node [above] { \normalsize{$W_1(X_1,nv)$}};
\draw (w2) node [above] { \normalsize{$W_1(X_3,nv)$}};
\draw (c) node [above] { \normalsize{$\TryC_1$}};

\draw (r3) node [above] { \normalsize{$R_3(X_3) \rightarrow nv$}};
\draw (r3) node [below] { \normalsize{$T_3$ commits}};

\draw (w3) node [above] { \normalsize{$W_2(X_2,nv)$}};
\draw (w3) node [below] { \normalsize{$T_2$ commits}};

\begin{scope}   
\draw [-,dotted] (2,0) node[left] {$T_0$} to (14,0);
\draw [|-|,thick] (2,0) node[left] {} to (4,0);
\draw [|-|,thick] (12,0) node[left] {} to (14,0);
\end{scope}
\begin{scope}   
\draw [-,thick] (-3,-1) node[left] {$T_1$} to (1.5,-1);
\draw [|-|,thick] (-3,-1) node[left] {} to (-1,-1);
\draw [|-|,thick] (-1,-1) node[left] {} to (1,-1);
\draw [|-,dotted] (1,-1) node[left] {} to (1.5,-1);
\draw  (5.5,-1) circle [fill, radius=0.05]  (5.5,-1);
\draw [-,dotted] (-3,-1) to  (5,-1);
\end{scope}
\begin{scope}   
\draw [|-|,thick] (8,-2) node[left] {$T_3$} to (10,-2);
\draw [|-|,thick] (-7,-2) node[left] {$T_2$} to (-5,-2);
\end{scope}
\end{tikzpicture}}}
        \\ 
        \vspace{2mm}
	\subfloat[By strict DAP, $T_0$ cannot distinguish this execution from the execution in \ref{sfig:dap-1} \label{sfig:dap-2}]{\scalebox{0.72}[0.72]{\begin{tikzpicture}
\node (r1) at (3,0) [] {};
\node (r2) at (17,0) [] {};

\node (e) at (5.5,-1) [] {};

\node (w1) at (-2,-1) [] {};
\node (w2) at (0,-1) [] {};
\node (c) at (1.5,-1) [] {};

\node (r3) at (9,-2) [] {};
\node (w3) at (13,-2) [] {};

\draw (r1) node [above] {\normalsize{$R_0(X_1) \rightarrow v$}};
\draw (r2) node [above] {\normalsize{$R_0(X_2)\rightarrow nv$}};

\draw (e) node [above] {\normalsize {(event of $T_1$)}};

\draw (w1) node [above] { \normalsize{$W_1(X_1,nv)$}};
\draw (w2) node [above] { \normalsize{$W_1(X_3,nv)$}};
\draw (c) node [above] { \normalsize{$\TryC_1$}};

\draw (r3) node [above] { \normalsize{$R_3(X_3) \rightarrow nv$}};
\draw (r3) node [below] { \normalsize{$T_3$ commits}};

\draw (w3) node [above] { \normalsize{$W_2(X_2,nv)$}};
\draw (w3) node [below] { \normalsize{$T_2$ commits}};

\begin{scope}   
\draw [-,dotted] (2,0) node[left] {$T_0$} to (18,0);
\draw [|-|,thick] (2,0) node[left] {} to (4,0);
\draw [|-|,thick] (16,0) node[left] {} to (18,0);
\end{scope}
\begin{scope}   
\draw [-,thick] (-3,-1) node[left] {$T_1$} to (1.5,-1);
\draw [|-|,thick] (-3,-1) node[left] {} to (-1,-1);
\draw [|-|,thick] (-1,-1) node[left] {} to (1,-1);
\draw [|-,dotted] (1,-1) node[left] {} to (1.5,-1);
\draw  (5.5,-1) circle [fill, radius=0.05]  (5.5,-1);
\draw [-,dotted] (-3,-1) to  (5,-1);
\end{scope}
\begin{scope}   
\draw [|-|,thick] (8,-2) node[left] {$T_3$} to (10,-2);
\draw [|-|,thick] (12,-2) node[left] {$T_2$} to (14,-2);
\end{scope}
\end{tikzpicture}}}
	
	\caption{Executions in the proof of Theorem~\ref{th:lpdap}; execution in \ref{sfig:dap-2} is not strictly serializable
        \label{fig:indisdap}} 
\end{center}
\end{figure*}
\begin{theorem}
\label{th:lpdap}
There exists no strict DAP TM implementation in $\mathcal{RWF}$.
\end{theorem}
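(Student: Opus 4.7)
The plan is to derive a contradiction by constructing an execution of $M$ that admits no valid serialization. I use four transactions as in Figure~\ref{fig:indisdap}: a read-only $T_0$ reading first $X_1$ then $X_2$; an updating $T_1$ writing $nv$ to $X_1$ and to $X_3$ and invoking $\TryC_1$; an updating $T_2$ writing $nv$ to $X_2$ and committing; and a read-only $T_3$ reading $X_3$. The strategy is to build two executions, $\pi_b$ (Figure~\ref{sfig:dap-1}) and $\pi_c$ (Figure~\ref{sfig:dap-2}), that are indistinguishable to $T_0$ by strict DAP, while $\pi_c$ forces a read--write cycle in every candidate serialization.

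First I build $\pi_b$. A complete step contention-free run of $T_2$ from the initial t-quiescent configuration commits by sequential progress, leaving $X_2 = nv$. Then $T_0$ begins and its $R_0(X_1)$ returns a value by wait-freedom; since no predecessor has written $X_1$, strict serializability forces the return value to be the initial value $v$. Next I schedule a prefix $\alpha_1$ of a step contention-free run of $T_1$: because such a run starting from t-quiescence would commit by sequential progress and so a subsequent isolated $R_3(X_3)$ would return $nv$, there exists a shortest prefix $\alpha_1$ of $T_1$'s events after which running $\rho_3$ (the step contention-free $R_3(X_3)$) yields $nv$, and $T_3$ then commits. Finally $T_0$ performs $R_0(X_2)$; wait-freedom forces a response, and because $T_2 \prec_{\pi_b}^{RT} T_0$ and no transaction other than $T_2$ writes $X_2$, strict serializability (in any completion in which $T_1$ commits, which is forced if $T_3$ is to return $nv$) makes the response $nv$. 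The resulting $\pi_b$ admits the serialization $T_2, T_0, T_1, T_3$.

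Next I define $\pi_c$ by sliding the entire $T_2$ fragment from before $T_0$ to the position just after $T_3$ commits. I claim $\pi_c$ is a valid execution of $M$ indistinguishable to $T_0$ from $\pi_b$. Strict DAP is applied twice: (i) $\Dset(T_2) = \{X_2\}$ is disjoint from $\Dset(T_1) = \{X_1, X_3\}$ and from $\Dset(T_3) = \{X_3\}$, so in $\pi_b$ the events of $T_2$ do not contend with any event of $T_1$ or $T_3$; (ii) the prefix $\pi_b'$ of $\pi_b$ ending at the response of $R_0(X_1)$ is itself an execution of $M$, and in $\pi_b'$ one has $\Dset(T_0) = \{X_1\}$ disjoint from $\Dset(T_2) = \{X_2\}$, so strict DAP applied to $\pi_b'$ forbids $T_0$ and $T_2$ from contending within $\pi_b'$. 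Since contention depends only on which base objects individual events target, those specific $T_0$- and $T_2$-events also remain non-contending in $\pi_b$. Sliding the $T_2$ block forward past $R_0(X_1)$, past $T_1$'s events, and past $T_3$'s events then preserves every base-object response, yielding $\pi_c$ in which $R_0(X_1) \to v$, $R_3(X_3) \to nv$, and $R_0(X_2) \to nv$.

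In $\pi_c$ the real-time order contains $T_3 \prec_{\pi_c}^{RT} T_2$, so any serialization $S$ of $\pi_c$ must simultaneously satisfy $T_3 \prec_S T_2$ (real-time), $T_2 \prec_S T_0$ (from $R_0(X_2) = nv$), $T_0 \prec_S T_1$ (from $R_0(X_1) = v$, the initial value), and $T_1 \prec_S T_3$ (from $R_3(X_3) = nv$), a cycle that contradicts strict serializability. The main obstacle is Step~1, namely justifying that $\alpha_1$ can be chosen so that $T_3$ actually reads $nv$ while $T_0^{(1)}$ reads $v$ and $T_1$ remains t-incomplete: $T_0$ and $T_1$ are not disjoint-access (both access $X_1$), so strict DAP does not directly decouple their executions. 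The argument leans on the fact that strict DAP implies weak DAP and thus Lemma~\ref{lm:dap}, applied to the disjoint-access pair $T_0, T_3$, to place $T_0$'s first read around $\alpha_1$ without disturbing $T_3$'s observation of $X_3 = nv$.
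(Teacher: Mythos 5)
Your overall skeleton matches the paper's: the same four transactions, the same target cycle $T_0 \prec T_1 \prec T_3 \prec T_2 \prec T_0$, and an indistinguishability argument for $T_0$ obtained by commuting the disjoint-access transaction $T_2$ across the rest of the execution. Your handling of the commutation of $T_2$ is adequate (applying strict DAP to the prefix in which $\Dset(T_0)$ is still $\{X_1\}$ is a legitimate and correctly identified subtlety). However, there is a genuine gap at the step you yourself flag as ``the main obstacle,'' and the remedy you propose does not close it. You schedule $T_0$'s $\Read_0(X_1)\rightarrow v$ \emph{before} $T_1$ takes any step, and then assert that some shortest prefix $\alpha_1$ of $T_1$'s run exists after which a step contention-free $\Read_3(X_3)$ returns $nv$. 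But once $T_0$ is live and t-incomplete, the configuration from which $T_1$ starts is not t-quiescent, so sequential progress gives $T_1$ nothing; and since $X_1\in\Dset(T_0)\cap\Dset(T_1)$, strict DAP permits $T_1$ to contend with (and thus observe) $T_0$'s read, which in $\mathcal{RWF}$ need not be invisible. The implementation may therefore have $T_1$ abort immediately or run forever without ever making the new value of $X_3$ observable, in which case no prefix $\alpha_1$ with the required property exists. Lemma~\ref{lm:dap} applied to the pair $T_0,T_3$ is irrelevant here: the problematic interaction is between $T_0$ and $T_1$, which are not disjoint-access.

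The paper's construction is ordered precisely to avoid this. It first runs $T_1$ \emph{solo} from the initial (t-quiescent) configuration, where sequential progress forces $T_1$ to commit; it then takes $\pi'$, the longest prefix of that solo run after which step contention-free reads of both $X_1$ and $X_3$ still return the old value $v$, and lets $e$ be $T_1$'s next enabled event, so that after $\pi'\cdot e$ one of the two reads (w.l.o.g.\ of $X_3$) flips to $nv$. Only \emph{then} is $T_0$'s read of $X_1$ inserted at $\pi'$, where it must return $v$ by the defining property of $\pi'$, and strict DAP (between $T_0$ and $T_3$, who share the base object updated by $e$) guarantees that $\alpha_1$ does not perturb the object whose update makes $X_3$'s new value visible to $T_3$. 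To repair your proof you would need to reorder your Step~1 along these lines; as written, the existence of $\alpha_1$ with $R_3(X_3)\rightarrow nv$ is unsupported.
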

\begin{proof}
Suppose by contradiction that there exists a strict DAP TM implementation $M\in \mathcal{RWF}$.

Let $v$ be the initial value of t-objects $X_1$, $X_2$ and $X_3$.
Let $\pi$ be the t-complete step contention-free execution of transaction $T_1$ that
writes the value $nv \neq v$ to t-objects $X_1$ and $X_3$.
By sequential progress for updating transactions, $T_1$ must be committed in $\pi$.

Note that any read-only transaction that runs step contention-free after some prefix of $\pi$ must return a non-abort value.
Since any such transaction must return $v$ after the empty prefix of $\pi$ and $nv$ when it starts from $\pi$,
there exists $\pi'$, the longest prefix of $\pi$ that cannot be extended with the
t-complete step contention-free execution of any transaction that performs a t-read of $X_1$ and
returns $nv$ nor with the t-complete step contention-free execution of any transaction that performs a t-read of $X_3$
and returns $nv$.

Consider the execution fragment $\pi'\cdot \alpha_1$, where $\alpha_1$ is
the complete step contention-free execution of transaction $T_0$ that performs $\Read_0(X_1)\rightarrow v$.
Indeed, by definition of $\pi'$ and wait-free progress (assumed for read-only transactions),
$M$ has an execution of the form $\pi'\cdot \alpha_1$.

Let $e$ be the enabled event of transaction $T_1$ in the configuration after $\pi'$.
Without loss of generality, assume that $\pi'\cdot e$ can be extended with the t-complete
step contention-free execution of a transaction that reads $X_3$ and returns $nv$.

We now prove that $M$ has an execution of the form $\pi'\cdot \alpha_1 \cdot e \cdot \beta \cdot \gamma$, where
\begin{itemize}
\item
$\beta$ is the t-complete step contention-free execution fragment of transaction $T_3$ that performs $\Read_3(X_3)\rightarrow nv$
and commits
\item
$\gamma$ is the t-complete step contention-free execution fragment of transaction $T_2$ that
writes $nv$ to $X_2$ and commits.
\end{itemize}
Observe that, by definition of $\pi'$, $M$ has an execution of the form $\pi' \cdot e \cdot \beta$.
By construction, transaction $T_1$ applies a nontrivial primitive to a base object, say $b$ in the event $e$
that is accessed by transaction $T_3$ in the execution fragment $\beta$.
Since transactions $T_0$ and $T_3$ access mutually disjoint data sets in $\pi'\cdot \alpha_1 \cdot e \cdot \beta$,
$T_3$ does not access any base object in $\beta$ to which transaction $T_0$ applies a nontrivial primtive
in the execution fragment $\alpha_1$ (assumption of strict DAP).
Thus, $\alpha_1$ does not contain a nontrivial primitive to $b$ and $\pi'\cdot \alpha_1 \cdot e \cdot \beta$
is indistinguishable to $T_3$ from the execution $\pi' \cdot e \cdot \beta$.
This proves that $M$ has an execution of the form $\pi'\cdot \alpha_1 \cdot e \cdot \beta$ (depicted in Figure~\ref{sfig:dap-0}).

Since transaction $T_2$ writes to t-object $\Dset(T_2)=X_2 \not\in \{\Dset(T_1)\cup \Dset(T_0)\cup \Dset(T_3)\}$,
by strict DAP,
the configuration after $\pi'\cdot \alpha_1 \cdot e \cdot \beta$ is indistinguishable to $T_2$
from a t-quiescent configuration. 
Indeed, transaction $T_2$ does not contend with
any of the transactions $T_1$, $T_0$ and $T_3$ on any base object in $\pi'\cdot \alpha_1 \cdot e \cdot \beta \cdot \gamma$.
Sequential progress of $M$ requires that
$T_2$ must be committed in $\pi'\cdot \alpha_1 \cdot e \cdot \beta \cdot \gamma$.
Thus, $M$ has an execution of the form $\pi'\cdot \alpha_1 \cdot e \cdot \beta \cdot \gamma$.

By the above arguments, the execution $\pi'\cdot \alpha_1 \cdot e \cdot \beta \cdot \gamma$
is indistinguishable to each of the transactions $T_1$, $T_0$, $T_2$ and $T_3$ from
$\gamma \cdot \pi'\cdot \alpha_1 \cdot e \cdot \beta$ in which transaction $T_2$ precedes $T_1$ in real-time ordering.
Thus, $\gamma \cdot \pi'\cdot \alpha_1 \cdot e \cdot \beta$ is also an execution of $M$.

Consider the extension of the execution $\gamma \cdot \pi'\cdot \alpha_1 \cdot e \cdot \beta$
in which transaction $T_0$ performs $\Read_0(X_2)$ and commits (depicted in Figure~\ref{sfig:dap-1}). 
Strict serializability of $M$ stipulates that $\Read_0(X_2)$ must return $nv$
since $T_2$ (which writes $nv$ to $X_2$ in $\gamma$) precedes $T_0$ in this execution.

Similarly, we now extend the execution $\pi'\cdot \alpha_1 \cdot e \cdot \beta \cdot \gamma$
with the complete step contention-free execution fragment of the t-read of $X_2$ by transaction $T_0$. 
Since $T_0$ is a read-only transaction,
it must be committed in this extension.
However, as proved above, this execution is indistinguishable to $T_0$ from the execution
depicted in Figure~\ref{sfig:dap-1} in which $\Read_0(X_2)$ must return $nv$.
Thus, $M$ has an execution of the form $\pi'\cdot \alpha_1 \cdot e \cdot \beta \cdot \gamma \cdot \alpha_2$,
where $T_0$ performs $\Read_0(X_2) \rightarrow nv$ in $\alpha_2$ and commits.

However, the execution $\pi'\cdot \alpha_1 \cdot e \cdot \beta \cdot \gamma \cdot \alpha_2$ (depicted in Figure~\ref{sfig:dap-2})
is not strictly serializable.
Indeed, transaction $T_1$ must be committed in any serialization and must precede transaction $T_3$
since $\Read_3(X_3)$ returns the value of $X_3$ written by $T_m$. However, transaction $T_0$ must
must precede $T_1$ since $\Read_0(X_1)$ returns the initial the value of $X_1$.
Also, transaction $T_2$ must precede $T_0$ since $\Read_0(X_2)$ returns the value of $X_2$ written by $T_2$.
But transaction $T_3$ must precede $T_2$ to respect real-time ordering of transactions.
Thus, $T_1$ must precede $T_0$ in any serialization. But there exists no such serialization: a contradiction to the
assumption that $M$ is strictly serializable.
\end{proof}
%
%
\section{A linear lower bound on expensive synchronization}
\label{sec:rwl}
Attiya \emph{et al.} identified two common expensive synchronization patterns that frequently arise in
the design of concurrent algorithms: \emph{read-after-write (RAW) or atomic write-after-read (AWAR)}~\cite{AGK11-popl,McKenney10}.
In this section, we prove a linear lower bound (in the size of the transaction's read set) on the number of RAWs or AWARs
for weak DAP TM implementations in $\mathcal{RWF}$.
To so so, we construct an execution in which each t-read
operation of an arbitrarily long read-only transaction contains a RAW or an AWAR. 
\begin{definition}
(RAW/AWAR metric) Let $\pi$ be a fragment of an execution of a TM implementation $M$ and let $\pi^i$ denote the $i$-th
event in $\pi$ ($i=0,\ldots , |\pi|-1$).   

We say that a transaction $T$ performs a \emph{RAW} (read-after-write) in $\pi$ if  
$\exists i,j; 0 \leq i < j < |\pi|$
such that (1) $\pi^i$ is a write to a base object $b$ by $T$, 
(2) $\pi^j$ is a read of a base object $b'\neq b$ by $T$ and 
(3) there is no $\pi^k$ such that $i<k<j$ and $\pi^k$ is a write to $b'$ by $T$.

We say a transaction $T$ performs an \emph{AWAR} (atomic-write-after-read)
in $\pi$ if $\exists i, 0 \leq i < |\pi|$ such that the event $\pi^i$
is the application of a nontrivial rmw primitive that reads a base object $b$ followed by a write to $b$.
\end{definition}
%
\begin{figure*}[t]
\begin{center}
	\subfloat[$\Read_0(X_j)\rightarrow v$ performs no RAW/AWAR; $T_0$ and $T_j$ are unaware of step contention \label{sfig:inv-1}]{\scalebox{0.72}[0.72]{\begin{tikzpicture}
\node (r1) at (3,0) [] {};
\node (r2) at (7.7,0) [] {};

\node (w1) at (7.5,-2) [] {};

\draw (r1) node [below] {\normalsize {$R_0(X_1) \cdots R_0(X_{j-1})$}};
\draw (r1) node [above] {\normalsize {$j-1$ t-reads}};

\draw (r2) node [above] {\normalsize {$R_0(X_j)\rightarrow v$}};
\draw (r2) node [below] {\normalsize {initial value}};

\draw (w1) node [above] {\normalsize {$W_j(X_j,nv)$}}; 
\draw (w1) node [below] {\normalsize {$T_j$ commits}};

\begin{scope}   
\draw [|-|,thick] (0,0) node[left] {$T_0$} to (6,0);
\draw [|-|,thick] (6.5,0) node[left] {} to (9,0);
\draw [-,dotted] (0,0) node[left] {} to (9,0);
\end{scope}
\begin{scope}   
\draw [|-|,thick] (6.5,-2) node[left] {$T_j$} to (9,-2);
\end{scope}
\end{tikzpicture}}}
        \\
        \vspace{2mm}
	\subfloat[$R_0(X_{m})$ must return $nv$ by strict serializability \label{sfig:inv-2}]{\scalebox{0.72}[0.72]{\begin{tikzpicture}
\node (r1) at (3,0) [] {};
\node (r2) at (7.7,0) [] {};
\node (r3) at (12.5,0) [] {};

\node (w1) at (7.5,-2) [] {};

\node (w2) at (-3,-2) [] {};

\draw (r1) node [below] {\small {$R_0(X_1) \cdots R_0(X_{j-1})$}};
\draw (r1) node [above] {\small {$j-1$ t-reads}};

\draw (r2) node [above] {\small {$R_0(X_j)\rightarrow v$}};
\draw (r2) node [below] {\small {initial value}};

\draw (w1) node [above] {\small {$W_j(X_j,nv)$}}; 
\draw (w1) node [below] {\small {$T_j$ commits}};

\draw (w2) node [above] {\small {$W_{\ell}(X_{m},nv)$}}; 
\draw (w2) node [below] {\small {$T_{\ell}$ commits}};

\draw (r3) node [above] {\small {$R_0(X_{m})\rightarrow nv$}};
\draw (r3) node [below] {\small {new value}};

\begin{scope}   
\draw [|-|,thick] (0,0) node[left] {$T_0$} to (6,0);
\draw [|-|,thick] (6.5,0) node[left] {} to (9,0);
\draw [|-|,thick] (-4,-2) node[left] {$T_{\ell}$} to (-1,-2);
\draw [|-|,thick] (11,0) node[left] {} to (13.5,0);
\draw [-,dotted] (0,0) to (13.5,0);
\end{scope}
\begin{scope}   
\draw [|-|,thick] (6.5,-2) node[left] {$T_j$} to (9,-2);
\end{scope}
\end{tikzpicture}}}
	\\ 
	\vspace{2mm}
	\subfloat[By weak DAP, $T_0$ cannot distinguish this execution from \ref{sfig:inv-2} \label{sfig:inv-3}]{\scalebox{0.72}[0.72]{\begin{tikzpicture}
\node (r1) at (3,0) [] {};
\node (r2) at (7.7,0) [] {};
\node (r3) at (17.5,0) [] {};

\node (w1) at (7.7,-2) [] {};

\node (w2) at (12.5,-2) [] {};

\draw (r1) node [below] {\normalsize {$R_0(X_1) \cdots R_0(X_{j-1})$}};
\draw (r1) node [above] {\normalsize {$j-1$ t-reads}};

\draw (r2) node [above] {\normalsize {$R_0(X_j)\rightarrow v$}};
\draw (r2) node [below] {\normalsize {initial value}};

\draw (w1) node [above] {\normalsize {$W_j(X_j,nv)$}}; 
\draw (w1) node [below] {\normalsize {$T_j$ commits}};

\draw (w2) node [above] {\normalsize {$W_{{\ell}}(X_{m},nv)$}}; 
\draw (w2) node [below] {\normalsize {$T_{\ell}$ commits}};

\draw (r3) node [above] {\normalsize {$R_0(X_{m})\rightarrow nv$}};
\draw (r3) node [below] {\normalsize {new value}};

\begin{scope}   
\draw [|-|,thick] (0,0) node[left] {$T_0$} to (6,0);
\draw [|-|,thick] (6.5,0) node[left] {} to (9,0);
\draw [|-|,thick] (11,-2) node[left] {$T_{\ell}$} to (14,-2);
\draw [|-|,thick] (16,0) node[left] {} to (18.5,0);
\draw [-,dotted] (0,0) to (18.5,0);
\end{scope}
\begin{scope}   
\draw [|-|,thick] (6.5,-2) node[left] {$T_j$} to (9,-2);
\end{scope}
\end{tikzpicture}}}
	\caption{Executions in the proof of Theorem~\ref{th:rwf}; execution in \ref{sfig:inv-3} is not strictly serializable
        \label{fig:indis}} 
\end{center}
\end{figure*}
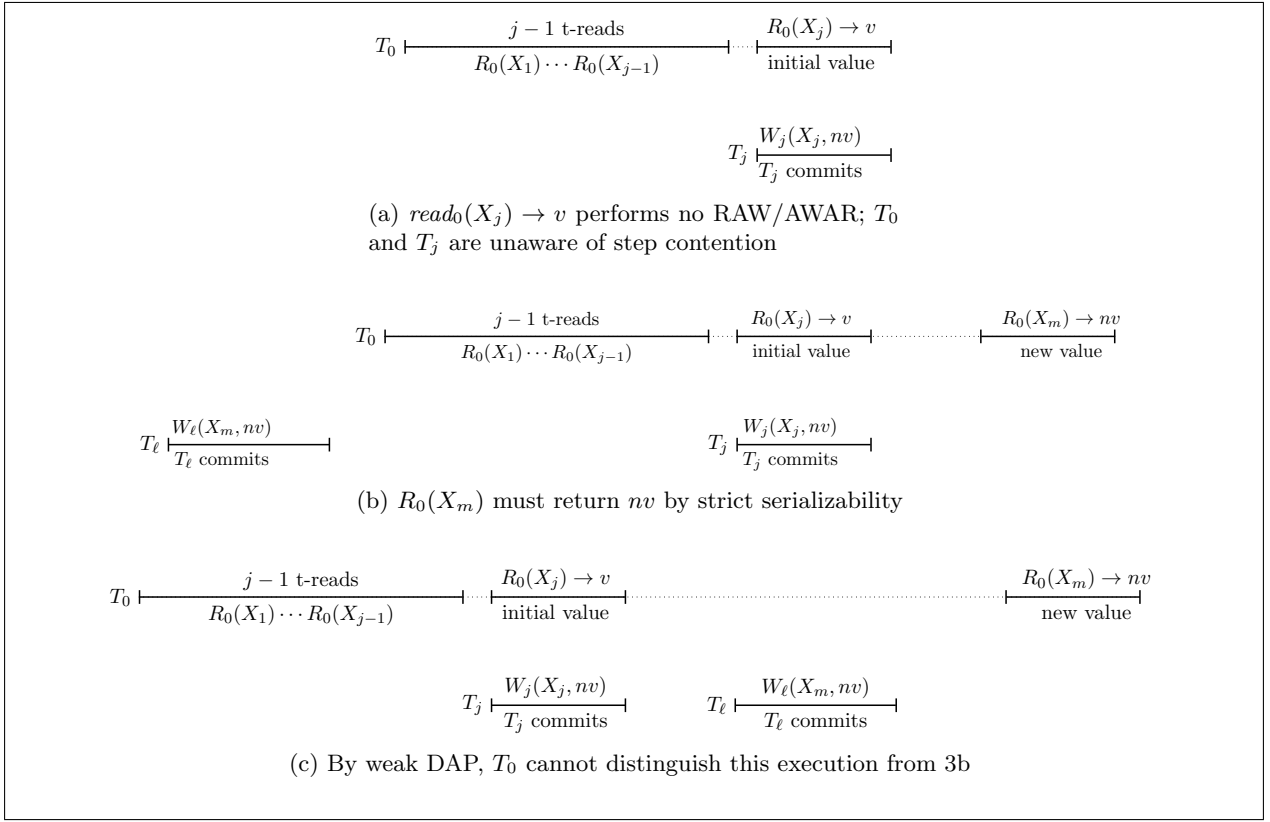
\begin{theorem}
\label{th:rwf}
Every weakly DAP TM implementation $M \in \mathcal{RWF}$
has, for all $m\in \mathbb{N}$, an
execution in which some read-only transaction $T_0$ 
with $m=|\Rset(T_0)|$ performs $\Omega(m)$ RAWs/AWARs.
\end{theorem}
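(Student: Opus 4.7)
My plan is, for each $m \in \mathbb{N}$, to construct a single ``staircase'' execution $E^*$ in which $T_0$ is a read-only transaction with $\Rset(T_0) = \{X_1, \ldots, X_m\}$, and to argue that every read $\Read_0(X_j)$ (for $j < m$) must perform a RAW or AWAR; this yields $\Omega(m)$ RAWs/AWARs. In $E^*$, $T_0$ performs $\Read_0(X_1), \ldots, \Read_0(X_m)$ sequentially, and between its $(j{-}1)$-th and $j$-th read an updating transaction $T_j$ with $\Dset(T_j)=\{X_j\}$ runs step contention-free, writes $nv \ne v$ to $X_j$ and commits; commit is justified by sequential progress together with Lemma~\ref{lm:dap} (each $T_j$ is disjoint-access from every prior transaction, so contends with none of them). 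By strict serializability each $\Read_0(X_j)$ returns the initial value $v$, since any serialization must place $T_0$ before all $T_i$ that write to t-objects in $\Rset(T_0)$.

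Next I fix $j<m$ and suppose toward contradiction that $\Read_0(X_j)$ performs no RAW/AWAR in $E^*$. Let $E_a$ denote the prefix of $E^*$ through $\Read_0(X_j)$ (Figure~\ref{sfig:inv-1}). Introduce an updating transaction $T_\ell$ writing $nv$ to $X_m$ and compare two placements: in $E_b$ (Figure~\ref{sfig:inv-2}), $T_\ell$ runs first, before $E_a$; in $E_c$ (Figure~\ref{sfig:inv-3}), $T_\ell$ runs right after $T_j$'s commit and before $T_0$'s subsequent reads. In both cases $T_0$ is then extended with $\Read_0(X_m)$. By Lemma~\ref{lm:dap}, $T_\ell$ is disjoint-access from every other transaction in the relevant prefixes and contends with none of them, so both $E_b$ and $E_c$ are genuine executions of $M$ in which $T_\ell$ commits. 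In $E_b$, since $T_\ell \prec^{RT} T_0$, strict serializability forces $\Read_0(X_m)$ to return $nv$.

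The decisive step is to conclude that $E_c$ is indistinguishable to $T_0$ from $E_b$, so that $\Read_0(X_m)$ returns $nv$ in $E_c$ as well. Weak DAP via Lemma~\ref{lm:dap} handles non-contention between $T_\ell$ and the surrounding transactions, but commuting $T_\ell$'s entire step contention-free fragment past the events of $\Read_0(X_j)$ itself is what forces the absence of a RAW or AWAR to be invoked: a standard Attiya--Guerraoui--Kouznetsov-style commutativity argument shows that under this hypothesis the events of $\Read_0(X_j)$ decompose into a read phase followed by a write phase in such a way that $T_\ell$'s disjoint-access execution can be slotted in without any party observing the difference. Granting this, $\Read_0(X_m) \to nv$ in $E_c$, and any serialization of $E_c$ must then satisfy $T_0 \prec T_j$ (from $\Read_0(X_j) \to v$), $T_j \prec T_\ell$ (real-time), and $T_\ell \prec T_0$ (from $\Read_0(X_m) \to nv$), a cycle contradicting strict serializability.

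The main obstacle I anticipate is exactly this indistinguishability step. Lemma~\ref{lm:dap} yields non-contention on base objects among disjoint-access transactions, but combining it with the no-RAW/AWAR hypothesis to move $T_\ell$'s fragment past the ``write tail'' of $\Read_0(X_j)$---without affecting $T_\ell$'s own computation or $T_0$'s later reads---is the technical crux, and is where the interplay between weak DAP and the AGK framework has to be set up carefully.
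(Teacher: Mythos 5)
Your high-level skeleton matches the paper's: a writer $T_j$ of $X_j$ concurrent with $T_0$, a writer $T_\ell$ of $X_m$ placed either before the whole execution or after $T_j$, indistinguishability of the two placements to $T_0$, and the serialization cycle $T_0 \prec T_j$ (from $\Read_0(X_j)\to v$), $T_j \prec T_\ell$ (real time), $T_\ell \prec T_0$ (from $\Read_0(X_m)\to nv$). But there are two genuine problems. First, your base ``staircase'' is set up incorrectly: if $T_j$ runs and commits strictly \emph{between} $\Read_0(X_{j-1})$ and $\Read_0(X_j)$, then nothing forces $\Read_0(X_j)$ to return $v$. A serialization placing $T_j$ before $T_0$ is perfectly legal at that point, so the implementation may return $nv$; and Lemma~\ref{lm:dap} gives you nothing here because $T_0$ and $T_j$ both access $X_j$ (they are not disjoint-access), and in any case a completed $T_j$ leaving traces in base objects that $T_0$ later reads is not ``concurrent contention.'' The paper avoids this by starting from a purely step contention-free $\alpha_1\cdots\alpha_j$ (where $\Read_0(X_j)\to v$ is forced because no writer has run at all) and then \emph{inserting} $T_j$'s entire execution $\delta_j$ \emph{inside} $\Read_0(X_j)$.

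Second, and more importantly, you misallocate where the no-RAW/AWAR hypothesis does its work, and then leave exactly that step as a ``granting this'' black box. Commuting $T_\ell$'s fragment is the easy part: $\Dset(T_\ell)=\{X_m\}$ is disjoint from every data item in the relevant prefix, so Lemma~\ref{lm:dap} alone gives non-contention and hence indistinguishability of the two placements of $\delta_m$ to $T_0$ and $T_j$; no commutativity with the ``write tail'' of $\Read_0(X_j)$ is needed. The hypothesis that $\Read_0(X_j)$ has no RAW and no AWAR is cashed in at the other place, in the paper's Claim~\ref{cl:one}: write $\alpha_j=\alpha^1_j\cdot e\cdot \alpha^f_j$ where $e$ is the first base-object write of the read; insert $\delta_j$ after $\alpha^1_j$. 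Since $\alpha^1_j$ is trivial, $T_j$ cannot see $T_0$ and commits; since $e$ is not an AWAR, it can still be applied after $\delta_j$; and since there is no RAW, every read in $\alpha^f_j$ touches only base objects already written in $e\cdot\alpha^f_j$, so $T_0$ cannot see $\delta_j$ and still returns $v$. This decomposition is the entire content of the lower bound --- without it you have neither $\Read_0(X_j)\to v$ nor the indistinguishability your cycle relies on --- so as written the proof has a genuine gap at its core.
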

\begin{proof}
Let $v$ be the initial value of each of the t-objects $X_1,\ldots , X_m$.
Consider the t-complete step contention-free execution
of transaction $T_0$ that performs $m$ t-reads $\Read_0(X_1)$, $\Read_0(X_1)$,$\ldots \Read_0(X_m)$ and commits.
We prove that each of the first $m-1$ t-reads must perform a RAW or an AWAR.

For each $j\in \{1,\ldots , m-1\}$, we define an execution of the form $\alpha_1\cdot \alpha_2 \cdots \alpha_j$, 
where for all $i \in \{1,\ldots , j\}$, $\alpha_i$ is the complete step contention-free execution fragment
of $\Read_0(X_j) \rightarrow v$.
Suppose by contradiction that $\alpha_j$ does not contain a RAW or an AWAR.

The following claim shows that we can schedule a committed transaction $T_j$ that writes a new value to $X_j$ 
concurrent to $\Read_0(X_j)$ such that
the execution is indistinguishable to both $T_0$ and $T_j$ from a 
step contention-free execution (depicted in Figure~\ref{sfig:inv-1}).
\begin{claim}
\label{cl:one}
For all $j\in \{1,\ldots , m-1\}$, $M$ has an execution of the form 
$\alpha_1 \cdots \alpha_{j-1} \cdot \alpha^1_j \cdot \delta_j \cdot \alpha^2_j$
where,
\begin{itemize}
\item
$\delta_{j}$ is the t-complete step contention-free execution fragment of transaction $T_{j}$ that
writes $nv \neq v$ and commits
\item
$\alpha^1_j\cdot \alpha^2_j=\alpha_j$ is the complete execution fragment of the $j^{th}$ t-read $\Read_0(X_j) \rightarrow v$
such that
\begin{itemize}
\item
$\alpha^1_j$ does not contain any nontrivial events
\item
$\alpha_1 \cdots \alpha_{j-1} \cdot \alpha^1_j \cdot \delta_j \cdot \alpha^2_j$ is indistinguishable 
to $T_0$ from the step contention-free
execution fragment $\alpha_1 \cdots \alpha_{j-1} \cdot \alpha^1_j \cdot \alpha^2_j$
\end{itemize}
\end{itemize}
Moreover, $T_j$ does not access any base object to which $T_0$ applies a nontrivial event in
$\alpha_1 \cdots \alpha_{j-1} \cdot \alpha^1_j \cdot \delta_j$.
\end{claim}
\begin{proof}
By wait-free progress (for read-only transactions) and strict serializability, $M$ has an execution of the form 
$\alpha_1 \cdots \alpha_{j-1}$ in which each of the t-reads performed by $T_0$ must return the initial value of the t-objects.

Since $T_j$ is an updating transaction, by sequential progress, there exists an execution of $M$
of the form $\delta_j\cdot \alpha_1 \cdots \alpha_{j-1}$.
Since $T_0$ and $T_j$ are disjoint-access in the $\delta_j \cdot \alpha_1 \cdots \alpha_{j-1}$, by Lemma~\ref{lm:dap},
$T_0$ and $T_j$ do not contend on any base object in $\delta_j \cdot \alpha_1 \cdots \alpha_{j-1}$.
Thus, $\alpha_1 \cdots \alpha_{j-1}  \cdot \delta_j$ is indistinguishable to $T_j$ from the execution $\delta_j$
and $\alpha_1 \cdots \alpha_{j-1}  \cdot \delta_j$ is also an execution of $M$.

Let $e$ be the first event that contains a write to a base object in $\alpha_j$.
If there exists no such write event to a base object in $\alpha_j$, then $\alpha^1_j= \alpha_j$
and $\alpha^2_j$ is empty.
Otherwise, we represent the execution fragment $\alpha_j$ as $\alpha^1_{j}\cdot e \cdot \alpha^f_{j}$.

Since $\alpha^s_{j}$ does not contain any nontrivial events that write to a base object, 
$\alpha_1 \cdots \alpha_{j-1} \cdot \alpha^s_j \cdot \delta_j$ is indistinguishable to transaction $T_j$
from the execution $\alpha_1 \cdots \alpha_{j-1} \cdot \delta_j$.
Thus, $\alpha_1 \cdots \alpha_{j-1} \cdot \alpha^s_j \cdot \delta_j$ is an execution of $M$.
Since $e$ is not an atomic-write-after-read, $\alpha_1 \cdots \alpha_{j-1} \cdot \alpha^s_j \cdot \delta_j\cdot e$
is an execution of $M$.
Since $\alpha_j$ does not contain a RAW, any read performed in $\alpha^f_{j}$ may only be performed
to base objects previously written in $e \cdot \alpha^f_{j}$.
Thus, $\alpha_1 \cdots \alpha_{j-1} \cdot \alpha^s_j \cdot \delta_j \cdot e\cdot \alpha^f_j$
is indistinguishable to transaction $T_0$ from the step contention-free execution
$\alpha_1 \cdots \alpha_{j-1} \cdot \alpha^s_j \cdot e\cdot \alpha^f_j$ in which $\Read_0(X_j) \rightarrow v$.

Choosing $\alpha^2_j= e\cdot \alpha^f_j$,
it follows that $M$ has an execution of the form 
$\alpha_1 \cdots \alpha_{j-1} \cdot \alpha^1_j \cdot \delta_j \cdot \alpha^2_j$
that is indistinguishable to $T_j$ and $T_0$ from a step contention-free execution.
The proof follows. 
\end{proof}
We now prove that, for all $j\in \{1,\ldots , m-1\}$, 
$M$ has an execution of the form 
$\delta_{m}\cdot \alpha_1 \cdots \alpha_{j-1} \cdot \alpha^1_j \cdot \delta_j \cdot \alpha^2_j$
such that
\begin{itemize}
\item
$\delta_{m}$ is the t-complete step contention-free execution of transaction $T_{\ell}$
that writes $nv\neq v$ to $X_m$ and commits
\item
$T_{\ell}$ and $T_0$ do not contend on any base object in 
$\delta_{m}\cdot \alpha_1 \cdots \alpha_{j-1} \cdot \alpha^1_j \cdot \delta_j \cdot \alpha^2_j$
\item
$T_{\ell}$ and $T_j$ do not contend on any base object in
$\delta_{m}\cdot \alpha_1 \cdots \alpha_{j-1} \cdot \alpha^1_j \cdot \delta_j \cdot \alpha^2_j$.
\end{itemize}

By sequential progress for updating transactions, $T_{\ell}$ which writes the value $nv$ to $X_{m}$ must be committed in
$\delta_{m}$ since it is running in the absence of step-contention from the initial configuration.
Observe that $T_{\ell}$ and $T_0$ are disjoint-access in 
$\delta_{m}\cdot \alpha_1 \cdots \alpha_{j-1} \cdot \alpha^1_j\cdot \delta_j \cdot \alpha^2_j$.
By definition of $\alpha^1_j$ and $\alpha^2_j$, 
$\delta_{m}\cdot \alpha_1 \cdots \alpha_{j-1} \cdot \alpha^1_j \cdot \delta_j \cdot \alpha^2_j$
is indistinguishable to $T_0$ from $\delta_{m} \cdot \alpha_1 \cdots \alpha_{j-1} \cdot \alpha^1_j \cdot \alpha^2_j$.
By Lemma~\ref{lm:dap}, $T_{\ell}$ and $T_0$ do not contend on any base object in 
$\delta_{m} \cdot \alpha_1 \cdots \alpha_{j-1} \cdot \alpha^1_j \cdot \alpha^2_j$.

By Claim~\ref{cl:one}, $\delta_{m}\cdot \alpha_1 \cdots \alpha_{j-1} \cdot \alpha^1_j \cdot \delta_j$
is indistinguishable to $T_j$ from $\delta_{m}\cdot \delta_j$.
But transactions $T_{\ell}$ and $T_j$ are disjoint-access in $\delta_{m}\cdot \delta_j$, and by Lemma~\ref{lm:dap},
$T_j$ and $T_{\ell}$ do not contend on any base object in $\delta_{m}\cdot \delta_j$.

Since strict serializability of $M$ stipulates that each of the $j$ t-reads performed by $T_0$ return the initial
values of the respective t-objects, $M$ has an execution of the form 
$\delta_{m}\cdot \alpha_1 \cdots \alpha_{j-1} \cdot \alpha^1_j \cdot \delta_j \cdot \alpha^2_j$.

Consider the extension of $\delta_{m}\cdot \alpha_1 \cdots \alpha_{j-1} \cdot \alpha^1_j \cdot \delta_j \cdot \alpha^2_j$
in which $T_0$ performs $(m-j)$ t-reads
of $X_{j+1},\cdots , X_m$ step contention-free and commits (depicted in Figure~\ref{sfig:inv-2}). 
By wait-free progress of $M$ and since $T_0$ is a read-only transaction, there exists such an execution.
Notice that the $m^{th}$ t-read, $\Read_0(X_{m})$
must return the value $nv$ by strict serializability since $T_{\ell}$ precedes $T_0$ in real-time order in this execution.

Recall that neither pairs of transactions $T_{\ell}$ and $T_{j}$ nor $T_{\ell}$ and $T_0$ contend on any base object in
the execution $\delta_{m}\cdot \alpha_1 \cdots \alpha_{j-1} \cdot \alpha^1_j \cdot \delta_j \cdot \alpha^2_j$.
It follows that for all $j\in \{1,\ldots , m-1\}$, 
$M$ has an execution of the form
$\alpha_1 \cdots \alpha_{j-1} \cdot \alpha^1_j \cdot \delta_j \cdot \alpha^2_j\cdot \delta_{m}$
in which $T_{j}$ precedes $T_{\ell}$ in real-time order.

Let $\alpha'$ be the execution fragment that extends
$ \alpha_1 \cdots \alpha_{j-1} \cdot \alpha^1_j \cdot \delta_j \cdot \alpha^2_j\cdot \delta_{m}$
in which $T_0$ performs $(m-j)$ t-reads
of $X_{j+1},\cdots , X_m$ step contention-free and commits (depicted in Figure~\ref{sfig:inv-3}).
Since
$\alpha_1 \cdots \alpha_{j-1} \cdot \alpha^1_j \cdot \delta_j \cdot \alpha^2_j \cdot \delta_{m}$
is indistinguishable to $T_0$ from the execution
$\delta_{m}\cdot \alpha_1 \cdots \alpha!_{j-1} \cdot \alpha^1_j \cdot \delta_j \cdot \alpha^2_j$,
$\Read_0(X_{m})$ must return the response value $nv$ in $\alpha'$. 

The execution 
$ \alpha_1 \cdots \alpha_{j-1} \cdot \alpha^1_j \cdot \delta_j \cdot \alpha^2_j\cdot \delta_{m} \cdot \alpha'$
is not strictly serializable. 
In any serialization, $T_j$ must precede $T_{\ell}$ to respect the real-time ordering of transactions,
while $T_{\ell}$ must precede $T_0$ since $\Read_j(X_{m})$ returns the value of $X_{m}$ updated by $T_{\ell}$. Also,
transaction $T_0$ must precede $T_j$ since $\Read_0(X_j)$ returns the initial value of $X_j$. 
But there exists no such serialization: a contradiction to the assumption that $M$ is strict serializable.

Thus, for all $j\in \{1,\ldots , m-1\}$, transaction $T_0$ must perform a RAW or an AWAR during the execution of $\Read_0(X_j)$,
completing the proof.
\end{proof}
Since Theorem~\ref{th:rwf} implies that read-only transactions must perform nontrivial events, we have the following corollary
that was proved directly in \cite{AHM09}.
\begin{corollary}[\cite{AHM09}]
\label{cr:inv}
There does not exist any weak DAP TM implementation $M\in \mathcal{RWF}$
that uses invisible reads.
\end{corollary}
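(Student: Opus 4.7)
The plan is to derive the corollary directly from Theorem~\ref{th:rwf} by observing that both synchronization patterns in the RAW/AWAR metric require a read-only transaction to apply a nontrivial primitive, which is forbidden under invisible reads.

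Suppose for contradiction that such an implementation $M$ exists. By Theorem~\ref{th:rwf}, for every $m\in \mathbb{N}$, $M$ has an execution in which some read-only transaction $T_0$ with $|\Rset(T_0)|=m$ performs $\Omega(m)$ RAWs/AWARs; in particular, for $m\geq 2$, $T_0$ performs at least one RAW or AWAR. I would then unpack the two cases from the RAW/AWAR definition: in the RAW case, $T_0$ executes an event $\pi^i$ that is a \emph{write} to a base object $b$, which by definition is an application of a nontrivial primitive; in the AWAR case, $T_0$ executes an event that is, by definition, the application of a \emph{nontrivial} rmw primitive on some base object. In either case, the corresponding event of $T_0$ in this execution is nontrivial.

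This contradicts the invisible-reads assumption on $M$, which stipulates that in every execution of $M$ and for every read-only transaction $T_k$, the projection $E|k$ contains no nontrivial events. Hence no TM implementation in $\mathcal{RWF}$ can simultaneously be weak DAP and use invisible reads, yielding the corollary.

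The argument is essentially bookkeeping on top of Theorem~\ref{th:rwf}; there is no real obstacle, since the only thing to verify is that the two patterns in the RAW/AWAR metric are witnessed by nontrivial primitives. The only place one has to be a little careful is ensuring that $m$ can be chosen so that the conclusion ``$\Omega(m)$ RAWs/AWARs'' is strictly positive (any $m\geq 2$ suffices), so that at least one nontrivial event is forced on $T_0$.
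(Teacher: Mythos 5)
Your proposal is correct and matches the paper's own (one-line) justification: the paper derives the corollary from Theorem~\ref{th:rwf} precisely by noting that a RAW or an AWAR forces the read-only transaction to apply a nontrivial primitive, contradicting read invisibility. Your unpacking of the two cases of the RAW/AWAR definition is just a more explicit rendering of the same argument.
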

\section{Related work}
\label{sec:related}
Strict DAP was introduced by Guerraoui and Kapalka~\cite{tm-book} who proved that it is impossible to implement
\emph{obstruction-free} TMs (transactions running in the absence of step contention must commit) that satisfy strict DAP. 
$\mathcal{RWF}$ is incomparable to the class of obstruction-free TMs,
as is the proof technique used to establish the impossibility.

Attiya \emph{et al.}~\cite{AHM09} introduced the notion of weak DAP 
and showed that it is impossible to implement 
weak DAP strictly serializable TMs in $\mathcal{RWF}$ 
if read-only transactions may only apply trivial primitives to base objects.
Attiya et al.~\cite{AHM09} also considered a stronger ``disjoint-access'' property, 
called simply DAP, referring to the
original definition proposed Israeli and
Rappoport~\cite{israeli-disjoint}.
In DAP,  two transactions are allowed to \emph{concurrently access} (even
for reading) the same base object only if they are disjoint-access.
For an $n$-process DAP TM implementation,  \cite{AHM09} showed that a read-only transaction must
perform at least $n-3$ writes. 
Our lower bound is strictly stronger than the one in~\cite{AHM09}, as
it assumes only weak DAP,
considers a more precise RAW/AWAR metric, and does not depend on the
number of processes in the system.
(Technically, the last point follows from the fact that the execution
constructed in the proof of Theorem~\ref{th:rwf}
uses only  $3$ concurrent processes.)
Thus, 
the theorem strengthens and subsumes the two lower bounds of~\cite{AHM09} within a single proof.

Perelman \emph{et al.}, considered the class of 
\emph{mv-permissive} TMs:
only updating transactions that read-write conflict on data items with another
updating transaction may be aborted~\cite{PFK10}.
In general, $\mathcal{RWF}$ is incomparable
with the class of \emph{mv-permissive} TMs. On the one hand, as we observed,
mv-permissiveness does not guarantee that read-only transactions are not guaranteed to commit
in a wait-free manner.  
On the other hand, $\mathcal{RWF}$ allows an
updating transaction to abort in the presence of a concurrent
read-only transaction, which is disallowed by mv-permissive TMs.       
Assuming that every t-operation returns in a finite number of its own
steps (possibly with an \emph{abort}, we call this wait-free TM-liveness), \cite{PFK10} showed that 
implementing a weak DAP mv-permissive TM is impossible.
While $\mathcal{RWF}$ is not subject to this
impossibility, it is impossible to implement strict DAP implementations in $\mathcal{RWF}$.
In principle, the class of TMs considered in~\cite{PFK10} is a proper subset of our
$\mathcal{RWF}$.
 
Also, \cite{PFK10} proved that mv-permissive TMs cannot be \emph{online space optimal} i.e. no mv-permissive TM can be optimal
in terms of number of versions kept. 
Our result on the space complexity of implementations in $\mathcal{RWF}$ that use invisible reads
is strictly stronger since it applies to a larger class of TMs and 
proves that such implementations must maintain unbounded number of versions
of every data item. 
Moreover, it is easy to see that our proof is also applicable for the class of TMs considered in~\cite{PFK10}. 

Attiya \emph{et al.} introduced the RAW/AWAR metric and proved that it is impossible to derive RAW/AWAR-free implementations of
a wide class of data types.
The metric has been used in~\cite{KR11}
to measure the complexity of read-only transactions in a strictly
stronger class of \emph{permissive} TMs that provide wait-free TM-liveness, in which a transaction
may be aborted only if committing it would violate opacity. 
Detailed coverage on memory fences and the RAW/AWAR metric can be found in \cite{McKenney10}.
%
%
%
\section{Concluding remarks}
\label{sec:disc}
In this paper, we studied the issue of providing different progress 
guarantees to different classes of transactions, assuming that
read-only transactions are wait-free, 
but updating transactions are guaranteed to commit only when they run sequentially.
First, we prove that if read-only transactions are required to be invisible, then any strictly serializable TM implementation
with these progress guarantees
must maintain an unbounded number of values for every data item.
Second, we prove that strictly serializable TMs with these progress guarantees cannot be disjoint-access parallel 
in a strict sense.
Then, assuming a weaker form of disjoint-access-parallelism, we show that
there exists an arbitrarily long read-only transaction that performs an expensive
synchronization pattern within each of its read operations. 
Our lower bounds also hold for stronger TM-correctness conditions like
\emph{opacity}~\cite{tm-book}, \emph{virtual-world consistency}~\cite{damien-vw}, \emph{TMS1} and \emph{TMS2}~\cite{TMS}.

Some questions remain open. Is the lower bound of Theorem~\ref{th:rwf} tight? We conjecture that it is not.
Can we establish a fundamental tradeoff between the complexity of read-only and updating transactions
incurred by implementations in $\mathcal{RWF}$? More generally,
assuming transactional operations provide wait-free termination, what kind of
transactions can be provided with unconditional progress?
Addressing these questions is ongoing and future work.
%
%
%
\bibliography{references}

\def\noopsort#1{} \def\No{\kern-.25em\lower.2ex\hbox{\char'27}}
  \def\no#1{\relax} \def\http#1{{\\{\small\tt
  http://www-litp.ibp.fr:80/{$\sim$}#1}}}
\begin{thebibliography}{10}

\bibitem{AGK11-popl}
H.~Attiya, R.~Guerraoui, D.~Hendler, P.~Kuznetsov, M.~Michael, and M.~Vechev.
\newblock Laws of order: Expensive synchronization in concurrent algorithms
  cannot be eliminated.
\newblock In {\em POPL}, pages 487--498, 2011.

\bibitem{AHM09}
H.~Attiya, E.~Hillel, and A.~Milani.
\newblock Inherent limitations on disjoint-access parallel implementations of
  transactional memory.
\newblock {\em Theory of Computing Systems}, 49(4):698--719, 2011.

\bibitem{Attiya09-tmread}
H.~Attiya and A.~Milani.
\newblock Transactional scheduling for read-dominated workloads.
\newblock In {\em Proceedings of the 13th International Conference on
  Principles of Distributed Systems}, OPODIS '09, pages 3--17, Berlin,
  Heidelberg, 2009. Springer-Verlag.

\bibitem{bushkov2012}
V.~Bushkov, R.~Guerraoui, and M.~Kapalka.
\newblock On the liveness of transactional memory.
\newblock In {\em Proceedings of the 2012 ACM Symposium on Principles of
  Distributed Computing}, PODC '12, pages 9--18, New York, NY, USA, 2012. ACM.

\bibitem{DStransaction06}
D.~Dice and N.~Shavit.
\newblock What really makes transactions fast?
\newblock In {\em Transact}, 2006.

\bibitem{TMS}
S.~Doherty, L.~Groves, V.~Luchangco, and M.~Moir.
\newblock Towards formally specifying and verifying transactional memory.
\newblock {\em Formal Aspects of Computing}, 25(5):769--799, 2013.

\bibitem{G05}
F.~Ellen, D.~Hendler, and N.~Shavit.
\newblock On the inherent sequentiality of concurrent objects.
\newblock {\em SIAM J. Comput.}, 41(3):519--536, 2012.

\bibitem{fraser}
K.~Fraser.
\newblock Practical lock-freedom.
\newblock Technical report, Cambridge University Computer Laborotory, 2003.

\bibitem{GK08-OFT}
R.~Guerraoui and M.~Kapalka.
\newblock On obstruction-free transactions.
\newblock In {\em Proceedings of the twentieth annual symposium on Parallelism
  in algorithms and architectures}, SPAA '08, pages 304--313, New York, NY,
  USA, 2008. ACM.

\bibitem{tm-book}
R.~Guerraoui and M.~Kapalka.
\newblock {\em Principles of Transactional Memory,Synthesis Lectures on
  Distributed Computing Theory}.
\newblock Morgan and Claypool, 2010.

\bibitem{stmbench7}
R.~Guerraoui, M.~Kapalka, and J.~Vitek.
\newblock Stmbench7: A benchmark for software transactional memory.
\newblock {\em SIGOPS Oper. Syst. Rev.}, 41(3):315--324, Mar. 2007.

\bibitem{Her91}
M.~Herlihy.
\newblock Wait-free synchronization.
\newblock {\em ACM Trans. Prog. Lang. Syst.}, 13(1):123--149, 1991.

\bibitem{HLM+03}
M.~Herlihy, V.~Luchangco, M.~Moir, and W.~N. Scherer, III.
\newblock Software transactional memory for dynamic-sized data structures.
\newblock In {\em PODC}, pages 92--101, 2003.

\bibitem{damien-vw}
D.~Imbs, J.~R.~G. de~Mend\'{\i}vil, and M.~Raynal.
\newblock Brief announcement: virtual world consistency: a new condition for
  stm systems.
\newblock In {\em PODC}, pages 280--281, 2009.

\bibitem{israeli-disjoint}
A.~Israeli and L.~Rappoport.
\newblock Disjoint-access-parallel implementations of strong shared memory
  primitives.
\newblock In {\em PODC}, pages 151--160, 1994.

\bibitem{KR11}
P.~Kuznetsov and S.~Ravi.
\newblock On the cost of concurrency in transactional memory.
\newblock In {\em OPODIS}, pages 112--127, 2011.
\newblock full version: http://arxiv.org/abs/1103.1302.

\bibitem{McKenney10}
P.~E. McKenney.
\newblock Memory barriers: a hardware view for software hackers.
\newblock Linux Technology Center, IBM Beaverton, June 2010.

\bibitem{PFK10}
D.~Perelman, R.~Fan, and I.~Keidar.
\newblock On maintaining multiple versions in {STM}.
\newblock In {\em PODC}, pages 16--25, 2010.

\bibitem{nztm}
F.~Tabba, M.~Moir, J.~R. Goodman, A.~W. Hay, and C.~Wang.
\newblock Nztm: Nonblocking zero-indirection transactional memory.
\newblock In {\em Proceedings of the Twenty-first Annual Symposium on
  Parallelism in Algorithms and Architectures}, SPAA '09, pages 204--213, New
  York, NY, USA, 2009. ACM.

\end{thebibliography}
%
\end{document}